\newcommand{\hz}{\hat{z}}
\newcommand{\ba}{\bm{a}}
\newcommand{\bb}{\bm{b}}
\newcommand{\be}{\bm{e}}
\newcommand{\bw}{\bm{w}}
\newcommand{\bx}{\bm{x}}
\newcommand{\by}{\bm{y}}
\newcommand{\bz}{\bm{z}}
\newcommand{\ox}{\overline{x}}
\newcommand{\os}{\overline{s}}
\newcommand{\bzero}{\bm{0}}
\newcommand{\bR}{\mathbb{R}}
\newcommand{\bZ}{\mathbb{Z}}
\DeclareMathOperator*{\argmax}{arg\,max}
\begin{document}
\title{Continuous-Time Best-Response and Related Dynamics
\\ in Tullock Contests with Convex Costs}
\titlerunning{Continuous-Time Best-Response Dynamics in Tullock Contests}

%
\author{Edith Elkind \and
Abheek Ghosh \and
Paul W.~Goldberg}
\authorrunning{E. Elkind, A. Ghosh, and P.W. Goldberg}
%
\institute{Department of Computer Science, University of Oxford, OX1 3QG, U.K.\\
\email{\{edith.elkind,abheek.ghosh,paul.goldberg\}@cs.ox.ac.uk}}
\maketitle

\begin{abstract}
Tullock contests model real-life scenarios that range from competition among proof-of-work blockchain miners to rent-seeking and lobbying activities. We show that continuous-time best-response dynamics in Tullock contests with convex costs converges to the unique equilibrium using Lyapunov-style arguments. We then use this result to provide an algorithm for computing an approximate equilibrium. We also establish convergence of related discrete-time dynamics, e.g., fictitious play.
These results indicate that the equilibrium is a reliable predictor of the agents' behavior in these games.
\end{abstract}

\section{Introduction}
Contests are games where agents compete by making costly investments to win valuable prizes. The model of Tullock~\cite{tullock1980efficient} is one of the most widely used for studying these environments, and has been applied to scenarios that originate from economics, political science, and computer science.
Applications include cryptocurrencies (proof-of-stake~\cite{bahrani2024centralization}, proof-of-work~\cite{leshno2020bitcoin}), rent-seeking, political lobbying, competition among firms, crowdsourcing, resource allocation and routing, college admissions, and sports~(see, e.g., the books \cite{konrad2009strategy,vojnovic2015contest} surveying some of these applications). 
As a concrete example, consider the game among miners in proof-of-work cryptocurrencies such as Bitcoin~\cite{chen2019axiomatic,leshno2020bitcoin}. A Bitcoin miner's expected reward for the next block is proportional to her costly computational effort. A Tullock contest closely approximates the game among these miners: a set of agents compete to win a valuable prize by investing costly effort, and an agent gets a portion of the prize proportional to his effort. 
In general, Tullock contests can model a large class of competitive environments where the reward of an agent increases with her own effort, decreases with others' effort, and the cost increases with her own effort.

Given an environment with strategic agents, such as a contest, 
it is desirable to be able to reliably predict the agents' behavior, so as to reason about possible outcomes. 
Nash equilibrium strategies serve as an initial approximation to this goal, but the existence of an equilibrium is not always a good predictor of the agents' behavior. Indeed, the traditional explanation of equilibrium is that it results from analysis and introspection by the agents in a situation where the rules of the game, the rationality of the agents, and the agents’ payoff functions are all common knowledge. Both conceptually and empirically, these assumptions are not always satisfied in real-life scenarios~\cite{fudenberg1998theory}. A model provides a more robust prediction of the outcome of a game if it explains how the outcome can be attained in a decentralized manner, ideally via a process that involves agents responding to incentives provided
by their environment.
Variations of best-response (BR) dynamics are arguably the simplest and the most intuitive of these models. In BR dynamics, agents sequentially change their current strategy towards one that best-responds to that of the other agents. This framework is especially well-suited to settings such as Tullock contests with convex costs, where pure-strategy equilibria are guaranteed to exist.

Recent papers (e.g., \cite{ewerhart2017lottery,ghosh2023best2}) have explored discrete-time BR dynamics in Tullock contests with convex costs. In this model, at each time step, an agent updates their current action to their BR action. These papers show that this dynamics converges for homogeneous agents, i.e., when all agents have the same cost function. On the other hand, when the agents are non-homogeneous, the dynamics does not converge, and this non-convergence result holds for many instances.\footnote{The set of instances (cost functions, starting state of the dynamics) on which the dynamics does not converge has a positive measure with respect to the set of all possible instances.}
Intuitively, these results imply that we can expect simple myopic agents to reach the equilibrium if they have similar cost functions, but not necessarily if they have different cost functions.

In real-life situations, agents may not be able to instantly change their current actions to their BR actions. They are more likely to slowly move from their current actions to actions with better utility. For example, in the game among Bitcoin miners, a miner may gradually buy more equipment and thereby increase their computational power if they see an opportunity to improve their utility by doing so. The change may be more rapid if the discrepancy is substantial, i.e., if the distance between the current action and the optimal BR action is larger, but we may still expect the change to be smooth enough that other agents can react before a given agent moves from their current action to their BR action. This observation also holds for other applications of Tullock contests: e.g., in a competition among firms for research and development of drugs and vaccines~\cite{dasgupta1986theory}, a firm may generally be able to only slowly change their output by hiring or firing researchers or increasing or decreasing their investment.

The continuous-time BR dynamics~\cite{fudenberg1998theory} is the classic model for studying such smoothly changing actions and is the main focus of this paper. We show that continuous-time BR dynamics converges to the unique equilibrium when agents have arbitrary, possibly non-homogeneous, weakly convex cost functions.\footnote{The primary focus in the literature has been on convex cost functions, because they model increasing marginal cost per unit utility, or, equivalently, decreasing marginal utility per unit cost, which is a feature of many economic environments. Also, Tullock contests with non-convex cost functions generally do not have a pure-strategy Nash equilibrium (PNE), and BR dynamics are not stable, by definition, when there does not exist a PNE. Hence, such models need to be studied using alternative equilibrium concepts and learning dynamics that accommodate mixed strategies.} 
Our rate-of-convergence bound is tight: the dynamics converges to an $\epsilon$-approximate equilibrium in $\Theta(\log(1/\epsilon))$ time. We prove this result using a Lyapunov potential function that measures the total regret, as perceived by the agents, for playing their current action instead of their BR action. 
We then extend this analysis to show convergence in certain classes of discrete-time dynamics, e.g., when the agents take small steps towards the BR (where the step size is not necessarily limiting to $0$ as assumed for continuous time) or when the agents play fictitious-play-like dynamics (best-respond to the empirical average action of other agents).
These results also lend to a simple algorithm to compute an $\epsilon$-approximate Nash equilibrium in time polynomial in $1/\epsilon$ and parameters of the model.

Overall, our results indicate that several learning dynamics that are only slightly more involved than the standard discrete-time BR dynamics circumvent the non-convergence result for non-homogeneous agents. So, we should expect even non-homogeneous agents with convex cost functions to converge to the unique equilibrium in Tullock contests.

\section{Related Work}\label{sec:related}
\cite{ewerhart2017lottery,ghosh2023best1,ghosh2023best2} study discrete-time BR dynamics in Tullock contests. 
\cite{ewerhart2017lottery,ghosh2023best1} study a special case of Tullock contests where the cost functions are linear, also known as lottery contests.
\cite{ewerhart2017lottery} shows that a lottery contest with homogeneous agents is a BR potential game~\cite{voorneveld2000best,kukushkin2004best}; the class of BR potential games is a strict generalization of the better-known classes of ordinal and exact potential games~\cite{monderer1996potential}.
Lottery contests are not exact potential games even with homogeneous agents~\cite{ewerhart2017lottery}, and they are not ordinal potential games with non-homogeneous agents~\cite{ewerhart2020ordinal}.

\cite{ghosh2023best1} uses the BR potential function of \cite{ewerhart2017lottery} to derive bounds on the rate of convergence of BR dynamics in lottery contests with homogeneous agents, assuming one random agent moves at a time.
It uses techniques from convex optimization (coordinate descent~\cite{wright2015coordinate}) and randomized algorithms (Markov chains).
\cite{ghosh2023best2} extends the convergence results of \cite{ghosh2023best1} for lottery contests to Tullock contests with convex costs (for homogeneous agents) by introducing an adversarial dynamic process called the discounted-sum dynamics.

In contrast to the papers above, we study continuous-time BR dynamics and show convergence for arbitrary non-homogeneous convex cost functions. In our analysis, we use a natural potential function that measures the total regret as perceived by the agents for playing their current action instead of their BR action, and use Lyapunov arguments to prove convergence. This potential function has been used before in the different context of proving convergence of stochastic fictitious play in two-player zero-sum games~\cite{hofbauer2002global,shamma2004unified}; however, these results do not extend to zero-sum games with three or more players.\footnote{Any $n$-player game can be written as an $(n+1)$-player zero-sum game.} Our analysis works because of the special structure exhibited by Tullock contests.

A Tullock contest is an aggregative game (games where agent $i$'s utility depends upon her action and the sum of the actions of all agents). \cite{thorlund1990iterative} is related to Section~\ref{sec:fictitious} of our paper. It shows that the fictitious play converges in aggregative games if the BR function's slope has an absolute value less than $1$. They use a contraction argument (intuitively, the next state can be written as a linear operation of the current state, where the eigenvalues of the linear operator have a magnitude of strictly less than $1$). The BR function's slope in Tullock contests can be unbounded, and the techniques do not carry over. 

In a related paper~\cite{dindovs2006better}, the authors investigate a randomized discrete better-response dynamics in aggregative games. In their model, a random agent selects a better-response action at each time step based on a probability distribution. Crucially, this distribution must assign positive probabilities to every subset of better-response actions of positive measure. This ensures that eventually the dynamics will get sufficiently close to the equilibrium, and using additional properties of the game near the equilibrium, the dynamics converges. While this paper shares a similar motivation with our work, the actual dynamics being studied differ significantly, and the techniques employed do not directly apply. The paper also uses a continuous $1$-dimensional dynamic process in their analysis, but the continuous BR dynamics we study is an $n$-dimensional process, which requires different techniques.

A Tullock contest also corresponds to a Cournot oligopoly with isoelastic inverse demand. There are convergence results for learning dynamics that apply to specific types of Cournot games, such as Cournot oligopoly with strictly declining BR functions~\cite{deschamps1975algorithm,thorlund1990iterative}, Cournot game with linear demand~\cite{slade1994does}, aggregative games that allow monotone BR selections~\cite{huang2002fictitious,dubey2006strategic,jensen2010aggregative}, and others~\cite{dragone2012static,bourles2017altruism}. However, all these methods do not apply to the Tullock contest whose BR function is not monotone~\cite{dixit1987strategic}.

A different line of research studies the convergence of learning dynamics in supermodular~\cite{milgrom1990rationalizability,vives1990nash} and submodular~\cite{topkis1979equilibrium} games. Tullock contests are neither supermodular nor submodular, nor generalizations of these concepts in the literature~\cite{milgrom1994monotone,amir1996cournot}, and the results and techniques in the related literature do not carry over.

\cite{moulin1978strategically} implicitly shows the strategic equivalence between contests and zero-sum games. This directly implies convergence of fictitious play dynamics in Tullock contests with two agents over a discretized action space~\cite{ewerhart2020fictitious}, but no such result has been proven for three or more agents. 
A different line of research has studied the convergence (or chaotic behavior) of learning dynamics in other types of contests (such as all-pay auctions, e.g.,~\cite{puu1991chaos,warneryd2018chaotic,cheung2021learning}), but these techniques and results also do not apply to Tullock contests.
\section{Preliminaries}\label{sec:prelim} 
\sloppy
In Tullock's model~\cite{tullock1980efficient}, $n$ agents participate in a contest with unit prize (normalized). Let $[n] = \{1, 2, \ldots, n\}$. 
The agents simultaneously produce non-negative output; we denote the output of agent $i$ by $x_i \in \bR_{\ge 0}$ and the output profile by $\bx = (x_1, \ldots, x_n) \in \bR_{\ge 0}^n$. Let $\bx_{-i} = (x_1, \ldots, x_{i-1}, x_{i+1}, \ldots, x_n)$ and $s_{-i} = \sum_{j \neq i} x_j$.
Agent $i$ incurs a cost of $c_i(x_i)$ 
for producing the output $x_i$ and receives a fraction of the prize proportional to $x_i$ 
if at least one agent produces a strictly positive output, and $1/n$ otherwise.\footnote{Some papers in the literature, 
e.g., Dasgupta and Nti~\cite{dasgupta1998designing}, assume that if $s=0$, all agents get a prize of $0$. 
Our analysis and results remain the same under this alternate assumption as well.} The utility of agent $i$, $u_i(\bx)$, is
\begin{equation}\label{eq:utility:single}
    u_i(\bx) = \frac{x_i}{\sum_j x_j} - c_i(x_i) = \frac{x_i}{x_i + s_{-i}} - c_i(x_i).
\end{equation}
Notice that the utility of agent $i$ depends upon her output $x_i$ and the total output of other agents $s_{-i}$, but not upon the distribution of $s_{-i}$ across the $n-1$ agents. To emphasize this, we will write $u_i(\bx)$ as $u_i(x_i, s_{-i})$.


We make the following assumptions on the cost functions: for every agent $i$, the cost function $c_i$ is (a) twice differentiable; (b) zero cost for non-participation, $c_i(0) = 0$; (c) increasing, $c_i'(z) > 0$ for all $z > 0$; (d) weakly convex, $c_i''(z) \ge 0$ for all $z \ge 0$.
These assumptions are standard in the literature, and without them, a pure-strategy Nash equilibrium may not exist (see, e.g., \cite[Chapter 4]{vojnovic2015contest}, \cite{ewerhart2020unique}). 

\begin{remark}\label{rk:logit}
Our model is equivalent to contests of general logit form with concave success functions and convex costs~\cite[Chapter 4]{vojnovic2015contest}. In logit contests, the utility of agent $i$ is given by 
$\widehat{u}_i(\widehat{\bx}) = \frac{\widehat{f_i}(\widehat{x_i})}{\sum_j \widehat{f_j}(\widehat{x_j})} - \widehat{c_i}(\widehat{x_i})$, 
where $\widehat{x_j}$ is the output of agent $j$. For each $i$, if $\widehat{f_i}$ is concave and $\widehat{c_i}$ is convex (and both are non-negative and strictly increasing), then we can do the change of variables $x_i = \widehat{f_i}(\widehat{x_i})$ and set $c_i(x_i) = \widehat{c_i}(\widehat{f_i}^{-1}(x_i))$ to write the utility function as given in equation \eqref{eq:utility:single}.
Similarly, a utility function of the form $\frac{V_i x_i}{\sum_j x_j} - c_i(x_i)$, where $V_i$ is the value of the prize for agent $i$, can be converted to the form given in \eqref{eq:utility:single} by scaling down the utility function by $V_i$, which does not affect the strategy of the agent.
Also note that Tullock contests with utility functions of the form $\widehat{u}_i(\widehat{\bx}) = \frac{\widehat{x_i}^r}{\sum_j \widehat{x_j}^{r}} - \widehat{c_i}(\widehat{x_i})$ for some $r \in (0,1]$ are special cases of logit contests where $\widehat{f_i}(\widehat{x_i}) = \widehat{x_i}^r$.
\end{remark}

\subsection{Best-Response}\label{sec:prelim:br}
Given the total output $s_{-i} = \sum_{j \neq i} x_j$ of all agents except $i$, the best-response (BR) of agent $i$ is an action $x_i$ such that
\begin{equation*}
    x_i \in \argmax_{z \ge 0} u_i(z, s_{-i}) = \argmax_{z \ge 0} \left( \frac{z}{z + s_{-i}} - c_i(z) \right) .
\end{equation*}
First, note that an agent has no BR if the output of every other agent is $0$.\footnote{
Indeed, if $s_{-i} = 0$, then by producing an output of $\epsilon > 0$, agent $i$ gets a utility of $u_i(\epsilon, 0) = 1 - c_i(\epsilon)$. Since $c_i$ is continuous and $c_i(0) = 0$, 
for sufficiently small $\epsilon$ we have $c_i(\epsilon)<1-1/n$ and hence $u_i(\epsilon, 0)> u_i(0, 0) = 1/n$. Thus, $x_i = 0$ cannot be a BR. Further, no $\epsilon > 0$ can be a BR, because $c_i$ is increasing and thus
$u_i(\epsilon/2, 0) = 1 - c_i(\epsilon/2) > 1 - c_i(\epsilon) = u_i(\epsilon, 0)$.
}
To circumvent this technical issue, we make the exogenous assumption that agent $i$ plays some small positive action if $s_{-i} = 0$.\footnote{This issue of not having any BR to $0$ can also be resolved by an alternate technical assumption: the prize is given to agent $i$ with probability $\frac{x_i}{b + \sum_j x_j}$ and agent $i$'s expected utility is $\frac{x_i}{b + \sum_j x_j} - c_i(x_i)$, where $b$ is a small positive constant. Under this alternate assumption the prize may remain unallocated with a positive probability of $\frac{b}{b + \sum_j x_j}$, unlike in our model. We \textit{expect} all results in this paper to hold for this alternate model as well.\label{fn:alternateModel}}
We allow this action to be arbitrary (and can possibly change over time) as long as it is positive and is at most $\frac{1}{2\max_i c_i'(0)}$. Moreover, as we will observe in our analysis, this case occurs only for a small period at the start of the dynamics and never afterward.
On the other hand, agent $i$ has a unique BR if $s_{-i} > 0$, i.e., if the output produced by at least one other agent $j$ is non-zero. This unique BR can be computed by taking a derivative of $u_i(z, s_{-i})$ with respect to $z$. If $s_{-i} > 0$, then 
\begin{align}
    \frac{\partial u_i(z, s_{-i})}{\partial z} &= \frac{ s_{-i} }{(z + s_{-i})^2} - c_i'(z), \label{eq:du} \\
    \frac{\partial^2 u_i(z, s_{-i})}{\partial z^2} &= \frac{-2 s_{-i} }{(z + s_{-i})^3} - c_i''(z) < 0, \label{eq:ddu}
\end{align}
where the last inequality holds because $z \ge 0$, $c_i''(z) \ge 0$, and $s_{-i} > 0$. So, $u_i(z, s_{-i})$ is strictly concave w.r.t.~$z$, i.e., $\frac{\partial u_i(z, s_{-i})}{\partial z}$ is strictly decreasing w.r.t.~$z$. 
Let $BR_i(s_{-i})$ denote the BR of agent $i$ given that the total output of other agents is $s_{-i}$. The first-order conditions for the BR are
\begin{align}\label{eq:br}
    BR_i(s_{-i}) > 0 &\text{ and } \left. \frac{\partial u_i(z, s_{-i})}{\partial z} \right|_{z = BR_i(s_{-i})} = 0,\qquad \text{ if } \left. \frac{\partial u_i(z, s_{-i})}{\partial z} \right|_{z = 0} > 0; \\
    BR_i(s_{-i}) = 0 &\text{ and } \left. \frac{\partial u_i(z, s_{-i})}{\partial z} \right|_{z = BR_i(s_{-i})} \le 0,\qquad \text{ if } \left. \frac{\partial u_i(z, s_{-i})}{\partial z} \right|_{z = 0} \le 0.
\end{align}

\subsection{Continuous-Time Best-Response Dynamics}
Let $\bx(t) = (x_{i}(t))_{i \in [n]}$ denote the action profile of the agents at time $t$ in the BR dynamics. Similarly, let $s(t) = \sum_j x_{j}(t)$ and $s_{-i}(t) = \sum_{j \neq i} x_{j}(t)$. 
The continuous-time (or simply, continuous) BR dynamics starts from an initial profile $\bx(0) = (x_{i}(0))_{i \in [n]} \in \bR_{\ge 0}^n$. At time $t \ge 0$, each agent $i \in [n]$ continuously updates their action as
\begin{equation}\label{eq:dx}
    \frac{d x_i(t)}{d t} = BR_i(s_{-i}(t)) - x_{i}(t).
\end{equation}

The continuous BR dynamics is a limiting case of the discrete BR dynamics when the step size goes to $0$. In discrete BR, for a step size of $\Delta t = 1$, $x_{i}(t+1) = BR_{i}(s_{-i}(t)) \Longleftrightarrow x_{i}(t+1) - x_{i}(t) = BR_{i}(s_{-i}(t)) - x_{i}(t)$ for any given agent $i$. For arbitrary $\Delta t \ge 0$, this dynamics can be generalized to $x_{i}(t+\Delta t) - x_{i}(t) = \Delta t (BR_{i}(s_{-i}(t)) - x_{i}(t)) \Longleftrightarrow \frac{x_{i}(t+\Delta t) - x_{i}(t)}{\Delta t} = BR_{i}(s_{-i}(t)) - x_{i}(t)$. The continuous dynamics takes the limit $\Delta t \rightarrow 0$.

\subsection{Equilibrium}
A Tullock contest with weakly convex cost functions always has a pure-strategy Nash equilibrium (which is also the unique equilibrium, including mixed-strategy Nash equilibria, see, e.g., \cite{ewerhart2020unique}). So, we exclusively focus on pure equilibria in this paper. 
\begin{definition}[Pure-Strategy Nash Equilibrium]
    An action profile $x^* = (x_1^*, \ldots, x_n^*)$ is a pure-strategy Nash equilibrium if it satisfies
    \[
        u_i (x_i^*, \bx_{-i}^*) \ge u_i (x_i', \bx_{-i}^*), 
    \]
    for every agent $i$ and every action $x_i'$ for agent $i$.
\end{definition}

In general, the BR dynamics in a Tullock contest may never \textit{exactly} reach the equilibrium in finite time, rather it may \textit{converge} to the equilibrium. 
The dynamics converges to an equilibrium if it reaches an $\epsilon$-approximate equilibrium in finite time for any $\epsilon > 0$.
\begin{definition}[Approximate Pure-Strategy Nash Equilibrium]
    An action profile $\bx = (x_1, \ldots, x_n)$ is an $\epsilon$-approximate pure-strategy Nash equilibrium, for $\epsilon > 0$, if it satisfies
    \[
        u_i (x_i, \bx_{-i}) \ge u_i (x_i', \bx_{-i}) - \epsilon, 
    \]
    for every agent $i$ and every action $x_i'$ for agent $i$.
\end{definition}

\section{Convergence of Continuous-Time Best-Response Dynamics}\label{sec:cont}
In this section, we prove that the continuous BR dynamics given in equation \eqref{eq:dx} rapidly converges to the unique pure-strategy Nash equilibrium, i.e., $\bx(t) \rightarrow \bx^*$ as $t \rightarrow \infty$, where $\bx^*$ is the equilibrium. 

\begin{theorem}\label{thm:contConverge}
The continuous best-response dynamics $\bx(t)$ in Tullock contests with weakly convex cost functions converges to an $\epsilon$-approximate pure-strategy Nash equilibrium in $O(\log(1/\epsilon))$ time. Further, there are instances where reaching an $\epsilon$-approximate equilibrium takes $\Omega(\log(1/\epsilon))$ time.
\end{theorem}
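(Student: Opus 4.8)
The plan is to exhibit a Lyapunov function that decays exponentially along the flow \eqref{eq:dx}. The natural candidate, the total perceived regret, is
\[
V(\bx) = \sum_{i=1}^n \Big( \max_{z \ge 0} u_i(z, s_{-i}) - u_i(x_i, s_{-i}) \Big).
\]
Writing $b_i := BR_i(s_{-i})$ and $d_i := \dot x_i = b_i - x_i$, each summand is the regret of agent $i$, so $V \ge 0$, with $V(\bx)=0$ exactly at the unique equilibrium $\bx^*$. Because every summand is nonnegative, the regret of each individual agent is at most $V(\bx)$, so $V(\bx)\le\epsilon$ immediately certifies that $\bx$ is an $\epsilon$-approximate equilibrium. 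It therefore suffices to show that $V(\bx(t))$ decreases from a constant-bounded initial value to $\epsilon$ in $O(\log(1/\epsilon))$ time, which I will obtain from a differential inequality $\dot V \le -c\,V$ with $c>0$.

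First I would differentiate $V$ along \eqref{eq:dx}. Let $g_i := \partial u_i(x_i,s_{-i})/\partial x_i$ from \eqref{eq:du}, let $h_i(y)$ denote $\partial u_i/\partial s_{-i}$ evaluated at first argument $y$, so $h_i(y)=-y/(y+s_{-i})^2$, and recall $\dot s_{-i}=\sum_{j\neq i} d_j$. Using the envelope theorem (the inner maximum is attained at $z=b_i$, so only the dependence through $s_{-i}$ survives in the differentiated $\max$ term) gives
\[
\dot V = -\sum_i g_i\, d_i \;+\; \sum_i \big(h_i(b_i) - h_i(x_i)\big)\,\dot s_{-i}.
\]
The first (``good'') term is manifestly nonpositive: by strict concavity of $u_i$ in its first argument \eqref{eq:ddu} together with the first-order condition \eqref{eq:du} at $b_i$, the sign of $g_i$ matches that of $d_i=b_i-x_i$, so each product $g_i d_i$ is nonnegative. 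A mean-value argument further shows this term is comparable to $V$ itself, since both equal $\sum_i |u_i''|\,d_i^2$ up to second order, which is exactly the strength needed from the good term to drive a bound of the form $\dot V\le -cV$.

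The crux, and the step I expect to be hardest, is controlling the second (``cross'') term: it need not be signed, and to leading order it is of the same magnitude $(\sum_i|d_i|)^2$ as the good term, so it cannot simply be discarded. This is precisely where generic multi-player zero-sum structure fails and where the aggregative structure of Tullock contests must be used. Here I would exploit that the prize is constant-sum: with $s=\sum_j x_j$ one has $\sum_i h_i(x_i) = -1/s$, and near equilibrium $\sum_i h_i(b_i)$ is close to the same value, so $\sum_i \big(h_i(b_i)-h_i(x_i)\big)$ is second order; rewriting $\dot s_{-i}=\dot s - d_i$ and expanding then lets the cross term be re-expressed through the $d_i$ and bounded against the good term. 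To make the bounds uniform I would first establish a priori estimates: the flow enters and remains in a compact region on which $s$ is bounded away from $0$ and from above (if $s$ is small the prize gradient in \eqref{eq:du} is large and best responses push $s$ up; if outputs are large the convex costs pull them down), so that $g_i$, $h_i$, $h_i'$, and $|u_i''|$ are uniformly bounded, with $|u_i''|$ bounded below. On this region the good term dominates, yielding $\dot V \le -c\,V$ for a constant $c>0$ depending only on the instance; integrating gives $V(t)\le V(0)e^{-ct}$ and hence the claimed $O(\log(1/\epsilon))$ bound.

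For the matching $\Omega(\log(1/\epsilon))$ lower bound I would linearize \eqref{eq:dx} about $\bx^*$. Since $BR_i$ depends only on $s_{-i}$, the Jacobian of $\bx \mapsto (BR_i(s_{-i})-x_i)_i$ at $\bx^*$ has $-1$ on the diagonal and $BR_i'(s_{-i}^*)$ off the diagonal, so the linearized dynamics is $\dot{\bx}=J(\bx-\bx^*)$. Convergence forces $J$ to be stable, and the slowest mode decays no faster than $e^{-\lambda t}$ for some finite $\lambda>0$; since $\epsilon$-approximation corresponds, by smoothness of the $u_i$, to $\|\bx-\bx^*\|$ of order $\sqrt{\epsilon}$, a generic initial condition requires $t=\Omega(\log(1/\epsilon))$. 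This is already visible in a symmetric two-agent lottery instance, where the dynamics reduces near equilibrium to a scalar linear ODE $\dot d=-\lambda d$ that decays exactly exponentially, and I would use this as the explicit witnessing instance.
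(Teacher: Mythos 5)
Your setup is the same as the paper's: the total-regret potential $V$, the decomposition of $\dot V$ into the term $-\sum_i g_i d_i$ (which, by concavity of $u_i$ in its first argument, is indeed at most $-V$) plus the cross term $\sum_i \bigl(h_i(b_i)-h_i(x_i)\bigr)\dot s_{-i}$, and the goal $\dot V \le -cV$. The genuine gap is exactly where you say the crux is: you never actually control the cross term, and the method you sketch cannot do it. Your plan is to confine the flow to a compact region where all coefficients are uniformly bounded and then argue ``the good term dominates.'' But, as you yourself note, both terms are of order $\|\bm{d}\|^2$, so domination is a fight over constant factors, and uniform coefficient bounds lose badly: by the mean value theorem the cross term is $\sum_i h_i'(\xi_i)\,d_i(\sum_{j\ne i}d_j)$, and crude bounding of $\sum_i |d_i|\,|\sum_{j\neq i} d_j|$ against $\sum_i d_i^2$ costs a factor of order $n$, while the ratio $|h_i'|/|\partial^2 u_i/\partial z^2|$ is unbounded when $y_i \gg s_{-i}$. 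Your near-equilibrium observation (that $\sum_i h_i(x_i)=-1/s$ and $\sum_i h_i(b_i)$ is close to it) is also only a statement about the \emph{sum} of the coefficients, whereas the cross term pairs each coefficient with a different $\dot s_{-i}$; second-order smallness of the sum does not make the bilinear form small. What actually closes the argument in the paper is an exact algebraic identity, not a domination estimate: one computes $\partial V/\partial x_k = c_k'(x_k) - \sum_{i\ne k} y_i/(y_i+s_{-i})^2$ (the envelope terms vanish identically), adds $V$ and $\dot V$, absorbs the cost terms by convexity ($-c_i(y_i)+c_i(x_i)+(y_i-x_i)c_i'(x_i)\le 0$), and then, after normalizing $p_i = y_i/\sigma$, $q_i = s_{-i}/\sigma$ with $\sum_i p_i = 1$, the remaining terms collapse to the perfect square $-\sum_i p_i\bigl(1 - \tfrac{1}{p_i+q_i}\bigr)^2 \le 0$. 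This gives $\dot V + V \le 0$ globally (whenever two agents have positive output), with the universal constant $c=1$; there is no compactness argument and no room for lossy bounds. Without carrying out this cancellation (or an equivalent one), your upper bound is an announcement, not a proof.

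Two smaller points. First, you ignore the degenerate states where $s_{-i}=0$ and $BR_i$ is undefined; the paper needs a separate ``warm-up phase'' argument showing that within $O(\log V(0))$ time at least two agents have positive output and remain so (positivity is preserved because $\dot x_i \ge -x_i$ forces $x_i(t) \ge x_i(\tau)e^{-(t-\tau)}$). Second, your lower bound is essentially the paper's construction (two symmetric agents with linear costs), and the idea is sound, but the linearization route needs an extra step — a lower bound on $\|\bx(t)-\bx^*\|$ for the \emph{nonlinear} flow — whereas the paper avoids this entirely by computing the symmetric dynamics exactly: with $c_i(y)=y/4$ one gets $V=(\sqrt y-1)^2$ and $\dot V = -2V$ identically, so the $\Omega(\log(1/\epsilon))$ bound is immediate.
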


Note that linear dynamical systems converge in $\Theta(\log(1/\epsilon))$ time. In our dynamics, see equation~\eqref{eq:dx}, the $-x_i(t)$ term is linear but the $BR_i(s_{-i})$ term is non-linear, so a $\Theta(\log(1/\epsilon))$ convergence can be expected but is not obvious.
We present the proof for the upper bound stated in Theorem~\ref{thm:contConverge} below and the lower bound in Appendix~\ref{app:proofs}.

\begin{proof}[Theorem~\ref{thm:contConverge} (Upper Bound)]
To prove this convergence result, we use the potential function given in \eqref{eq:V}.
This potential has been used previously to prove the convergence of stochastic fictitious play for two-player zero-sum games with finite action space; see Section~\ref{sec:related} for further discussion.
For an action profile $\bx$,
let the potential function $V(\bx)$ be defined as:
\begin{align}\label{eq:V}
    V(\bx) &= \sum_{i \in [n]} V_i(\bx), \\
    \text{ where } V_i(\bx) &= \max_{z} u_i(z, s_{-i}) - u_i(x_i, s_{-i}) = u_i(BR_i(s_{-i}), s_{-i}) - u_i(x_i, s_{-i}) . \nonumber
\end{align}
Notice that $\max_{z} u_i(z, s_{-i})$ is not well-defined if $s_{-i} = 0$. In this case, as discussed in Section~\ref{sec:prelim}, we assume that $BR_i(0) \in (0, \frac{1}{2 \max_i c_i'(0)}]$, and $V_i(\bx) = u_i(BR_i(0), 0) - u_i(x_i, 0)$. We will prove that such profiles can only occur during a short initial phase of the BR dynamics.

$V_i(\bx)$ measures agent $i$'s regret for playing $x_i$ instead of the best possible action given $s_{-i}$, i.e., it is the amount of utility that agent $i$ can increase by playing the BR instead of $x_i$. Notice that, by definition,
\begin{enumerate}
    \item $V(\bx) \ge 0$. Because $u_i(BR_i(s_{-i}), s_{-i}) \ge u_i(x_i, s_{-i}) \Longleftrightarrow V_i(\bx) \ge 0$ for every agent $i$ and profile $\bx$, which implies $V(\bx) \ge 0$.
    \item $V(\bx) = 0$ at the equilibrium. 
    $V(\bx) = 0 \Longleftrightarrow \bx = \bx^*$ because $ V(\bx) = 0 \Longleftrightarrow  V_i(\bx) = 0, \forall i \in [n] \Longleftrightarrow u_i(BR_i(s_{-i}), s_{-i}) = u_i(x_i, s_{-i}), \forall i \in [n]$.
\end{enumerate}
Given the profile $\bx(t)$, we can write the potential at time $t$ as $V(\bx(t)) = \sum_i V_i(\bx(t))$. For conciseness, we denote $V_i(\bx(t))$ by $V_i(t)$, or simply $V_i$, when the dependency on $\bx(t)$ is clear from the context; similarly, $V(\bx(t))$ by $V(t)$ or $V$. Given the dynamics followed by $\bx(t)$, equation \eqref{eq:dx}, we can write the dynamics that $V(t)$ follows as
\begin{equation}\label{eq:dV}
     \frac{d V}{dt} = \sum_{i \in [n]} \frac{\partial V}{\partial x_i} \frac{d x_i}{dt}.
\end{equation}

We next bound the time it takes to reach a state with two positive outputs, and we show that this property always holds thereafter.

\paragraph{Warm-Up Phase}
First, notice that if $x_i(\tau) > 0$ at some time point $\tau$, then $x_i(t) > 0$ for all $t \ge \tau$ because 
$\frac{dx_i(t)}{dt} = BR(s_{-i}(t)) - x_{i}(t) \ge - x_i(t) \implies \frac{d x_i(t)}{x_i(t)} \ge -dt \implies x_i(t) \ge x_i(\tau) e^{-(t - \tau)} > 0$. 
So, once we reach a state with two agents $i$ and $j \neq i$ with positive output, then these two agents will always have a positive output thereafter.

Say we start from a profile $\bx(0) = \bzero$, i.e., all agents have $0$ output initially. By our technical assumption discussed in Section~\ref{sec:prelim}, the action of an agent $i$ is some small constant in $(0, \frac{1}{2 \max_i c_i'(0)}]$, say $\eta_i$. So, $\frac{dx_i}{dt} = \eta_i > 0$ at time $t=0$ for all $i$, which implies that at time $dt > 0$ with $dt \rightarrow 0$, we have $x_i(dt) > 0$, as required.

Let us now consider the case when there is only one agent $i$ with $x_i(0) = \alpha > 0$, and all other agents $j \neq i$ have $x_j(0) = 0$. 
Now, if $x_i(0) = \alpha < \frac{1}{c_j'(0)}$ for some $j \neq i$, then $BR_j(s_{-j}(0)) = BR_j(\alpha) > 0$ because by the first-order condition, equation \eqref{eq:du}, for $z = 0$, we have
\begin{align*}
    \frac{\partial u_i(z, \alpha)}{\partial z} &= \frac{\alpha}{(z + \alpha)^2} - c_j'(z) = \frac{1}{\alpha} - c_j'(0) > 0.
\end{align*}
So, at time $0$, we have $\frac{dx_j}{dt} = BR_j(s_{-j}(0)) - x_j(0) = BR_j(\alpha) > 0$, which implies that at time $dt > 0$ with $dt \rightarrow 0$, we have $x_j(dt) > 0$. 

Now, let us consider the case when $x_i(0) = \alpha \ge \max_{j \neq i} \frac{1}{c_j'(0)}$. Let $\beta = \max_{j \neq i} \frac{1}{c_j'(0)}$ for conciseness. Let us bound the time---denoted by $T$---it takes to reach $x_i(T) < \beta$. As $x_i(t) \ge \beta$ for all $t < T$, we have $\frac{d x_i(t)}{dt} = BR_i(s_{-i}(t)) - x_i(t) = BR_i(0) - x_i(t) = \eta_i - x_i(t) \le \frac{\beta}{2} - x_i(t)$, where the last inequality holds because $\eta_i \le \max_j \frac{1}{2 c_j'(0)} \le \max_{j \neq i} \frac{1}{2 c_j'(0)} = \frac{\beta}{2}$. Using this, we get
\begin{equation*}
    \frac{d x_i(t)}{dt} \le \frac{\beta}{2} - x_i(t) \implies \frac{d x_i(t)}{x_i(t) - \frac{\beta}{2}} \le -dt \implies \ln \left( \frac{x_i(t) - \frac{\beta}{2}}{\alpha - \frac{\beta}{2}} \right) \le -t,
\end{equation*}
which implies that to reach $x_i(T) < B$, it is sufficient to have $T > \ln\left( \frac{\alpha - \frac{\beta}{2}}{\beta - \frac{\beta}{2}} \right) = \ln\left( \frac{2 \alpha}{\beta} - 1\right)$. Further notice that $V(\bx(0)) = V_i(\bx(0)) = (1 - c_i(\eta_i)) - (1 - c_i(x_i(0))) = x_i(0) - \eta_i \ge \alpha - \frac{\beta}{2}$. So, within $T = O(\log(V(0)))$ time, we get at least two agents with positive output, and this property holds thereafter.

\paragraph{Main Phase} Given our analysis of the warm-up phase above, from here on we assume that there are always two agents with positive output. We next prove the following lemma about $V(t)$.
\begin{lemma}\label{lm:V}
    The potential $V(t) = V(\bx(t))$ at any time $t$ satisfies the following differential inequality
    \[
        \frac{d V}{dt} + V \\
        \le -\sum_i \frac{y_i}{\sum_j y_j} \left( 1 - \frac{\sum_j y_j}{y_i + s_{-i}} \right)^2 \le 0,
    \]
    where the dependency on $t$ is suppressed,
    where $y_i(t) = BR_i(s_{-i}(t))$,
    and assuming that there are at least two agents with positive output in the profile $\bx(t)$.
\end{lemma}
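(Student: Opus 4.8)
The plan is to first rewrite $V$ in a form whose time-derivative is transparent, exploiting the fact that the prize shares always sum to one. Writing $M_i(s_{-i}) = \max_z u_i(z,s_{-i}) = u_i(y_i,s_{-i})$, we have $V = \sum_i M_i(s_{-i}) - \sum_i u_i(x_i,s_{-i})$, and since at least two outputs are positive we have $s := \sum_j x_j > 0$, so the realized-utility sum \emph{telescopes}: $\sum_i u_i(x_i,s_{-i}) = \sum_i \frac{x_i}{s} - \sum_i c_i(x_i) = 1 - \sum_i c_i(x_i)$. Hence $V = \sum_i\big(w_i - c_i(y_i)\big) - 1 + \sum_i c_i(x_i)$, where $w_i := \frac{y_i}{y_i + s_{-i}}$ is agent $i$'s win probability at its best response. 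This reformulation is what lets us bypass the messy direct expansion of the per-agent cross terms.

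Next I would differentiate along the flow \eqref{eq:dx}, writing $\dot x_i = y_i - x_i$ and $\dot s_{-i} = \sum_{j\neq i}(y_j - x_j)$. For the maximized terms I invoke the envelope theorem: since $y_i$ is the unique maximizer of $u_i(\cdot,s_{-i})$, $M_i'(s_{-i}) = \frac{\partial u_i}{\partial s_{-i}}\big|_{z=y_i} = -\frac{y_i}{(y_i+s_{-i})^2} =: -p_i$, which also covers the corner case $y_i=0$ (there $p_i=0$ and $M_i$ is still $C^1$). Differentiating the \emph{telescoped} form then gives the clean identity $\frac{dV}{dt} = -\sum_i p_i\,\dot s_{-i} + \sum_i c_i'(x_i)\,\dot x_i$, with no need to differentiate $BR_i(\cdot)$ itself.

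Adding $V$, the cost contributions group agent-by-agent into $c_i'(x_i)(y_i - x_i) + c_i(x_i) - c_i(y_i) = -\big(c_i(y_i) - c_i(x_i) - c_i'(x_i)(y_i-x_i)\big) =: -C_i \le 0$ by weak convexity of $c_i$. What remains is the ``prize'' part $-\sum_i p_i\dot s_{-i} + \sum_i w_i - 1$, and the crux is to show it equals the stated right-hand side \emph{exactly}. The key elementary fact is the linearity $\dot s_{-i} = \big(\sum_j y_j\big) - y_i - s_{-i} = Y - (y_i+s_{-i})$, writing $Y = \sum_j y_j$; this gives $p_i\dot s_{-i} = w_i\big(\tfrac{Y}{y_i+s_{-i}} - 1\big)$, so $-p_i\dot s_{-i} = w_i - \frac{w_i Y}{y_i+s_{-i}}$. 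Substituting and using $\sum_i y_i = Y$, a direct expansion of $-\frac{1}{Y}\sum_i y_i\big(1 - \frac{Y}{y_i+s_{-i}}\big)^2$ shows it equals $2\sum_i w_i - Y\sum_i \frac{w_i}{y_i+s_{-i}} - 1$, which is precisely the prize part. Therefore $\frac{dV}{dt} + V = -\sum_i \frac{y_i}{\sum_j y_j}\big(1 - \frac{\sum_j y_j}{y_i+s_{-i}}\big)^2 - \sum_i C_i$, and both pieces are non-positive, yielding the full chain of inequalities.

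The main obstacle is this final algebraic reconciliation: it is not a priori obvious that the cross-coupling term $-\sum_i p_i\dot s_{-i}$ (which through $\dot s_{-i}$ mixes all agents' motions) plus the scalar $\sum_i w_i - 1$ should collapse into a single per-agent sum of squares. Making it work hinges on the two structural facts special to Tullock contests, namely that the prize shares sum to one (so the realized-utility sum telescopes) and the linear relation $\dot s_{-i} = Y - y_i - s_{-i}$. Two edge cases merit a remark: summands with $y_i=0$ vanish automatically since each carries a factor $y_i$, so corner best responses cause no trouble; and the degenerate case $Y=\sum_j y_j = 0$ (all best responses zero), where the displayed right-hand side is formally $0/0$, should be treated separately by checking directly that $\frac{dV}{dt}+V = -1 - \sum_i C_i < 0$ there.
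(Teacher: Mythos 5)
Your proof is correct and follows essentially the same route as the paper's: the same telescoping of the realized utilities via $\sum_i x_i/s = 1$, an envelope-theorem treatment of the maximized terms (the paper derives this by hand from the first-order conditions, noting Milgrom--Segal as an alternative, but the content is identical), the same per-agent convexity bound, and the same algebraic identity collapsing the prize part into $-\sum_i \frac{y_i}{\sum_j y_j}\bigl(1 - \frac{\sum_j y_j}{y_i + s_{-i}}\bigr)^2$. Your explicit handling of the degenerate case $\sum_j y_j = 0$ is in fact slightly more careful than the paper, which implicitly assumes $\sigma > 0$ when normalizing $p_i = y_i/\sigma$.
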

\begin{proof}
Let us suppress the dependency on $t$, e.g., let us write $\bx(t)$ as $\bx$ and $V(t)$ as $V$. Let $y_i = BR_i(s_{-i})$ for conciseness.

Note that for general convex cost function $c_i$, we do not have a closed-form formula for $BR_i$. But from the first-order conditions, equation \eqref{eq:br}, we have
\begin{align*}
    y_i > 0 \text{ \& } \left. \frac{\partial u_i(z, s_{-i})}{\partial z} \right|_{z = y_i} = 0,&  \text{ if } \left. \frac{\partial u_i(z, s_{-i})}{\partial z} \right|_{z = 0} > 0; \\
    y_i = 0 \text{ \& } \left. \frac{\partial u_i(z, s_{-i})}{\partial z} \right|_{z = y_i} \le 0,&  
    \text{ if } \left. \frac{\partial u_i(z, s_{-i})}{\partial z} \right|_{z = 0} \le 0.
\end{align*}
Now, $ \left. \frac{\partial u_i(z, s_{-i})}{\partial z} \right. = \frac{s_{-i}}{(z + s_{-i})^2} - c_i'(z)$; plugging in $z = 0$ we get $\left. \frac{\partial u_i(z, s_{-i})}{\partial z} \right|_{z = 0} = \frac{s_{-i}}{(0 +s_{-i})^2} - c_i'(0) = \frac{1}{s_{-i}} - c_i'(0)$. The condition $\left. \frac{\partial u_i(z, s_{-i})}{\partial z} \right|_{z = 0} > 0$ corresponds to $\frac{1}{s_{-i}} - c_i'(0) > 0 \Longleftrightarrow s_{-i} c_i'(0) < 1$. Similarly, $\left. \frac{\partial u_i(z, s_{-i})}{\partial z} \right|_{z = 0} \le 0$ corresponds to $ s_{-i} c_i'(0) \ge 1$. Using these, the first-order conditions at $y_i = BR_i(s_{-i})$ can be rewritten as
\begin{align}
    \frac{s_{-i}}{(y_i + s_{-i})^2} = c_i'(y_i), & \text{ if $s_{-i} c_i'(0) < 1$,} \label{eq:br1} \\
    y_i = 0, & \text{ if $s_{-i} c_i'(0) \ge 1$}. \label{eq:br2}
\end{align}
We can write $V$ as
\begin{align}
    V &= \sum_i V_i = \sum_i( u_i(y_i, s_{-i}) - u_i(x_i, s_{-i}) )  \label{eq:V1} \\
    &= \sum_i \left( \frac{y_i}{y_i + s_{-i}} - c_i(y_i) - \frac{x_i}{x_i + s_{-i}} + c_i(x_i) \right) = \sum_i \frac{y_i}{y_i + s_{-i}} - \sum_i c_i(y_i) + \sum_i c_i(x_i) - 1. \nonumber
\end{align}
The time derivative of $V$ w.r.t.~$t$ is $\frac{d V}{d t} = \sum_k \frac{\partial V}{\partial x_k} \frac{d x_k}{d t}$, where $\frac{d x_k}{d t} = y_k - x_k$ in continuous BR dynamics. 
To write $\frac{\partial V}{\partial x_k}$, we need to know $\frac{\partial y_i}{\partial x_k}$ for all $i$ and $k$. Note that $\frac{\partial y_i}{\partial x_k} = 0$ for $k = i$ and $\frac{\partial y_i}{\partial x_k} = \frac{d y_i}{d s_{-i}}$ for $k \neq i$.  Due to the constraint that $y_i$ (the best-response) is always non-negative, there is a point of non-differentiability at $y_i = 0$. In particular, if $c_i'(0) > 0$, then:
\begin{itemize}
    \item If $s_{-i} > \frac{1}{c_i'(0)}$, then in the small neighborhood around $s_{-i}$, say $[s_{-i} - \delta, s_{-i} + \delta]$ for small $\delta > 0$, we will have the corresponding $y_i = 0$. So, $\frac{d y_i}{d s_{-i}} = 0$.
    \item If $s_{-i} < \frac{1}{c_i'(0)}$, then $y_i > 0$ and is governed by equation \eqref{eq:br1}. We can differentiate this equation w.r.t.~to $s_{-i}$ to get
    \begin{align}
        &\frac{1}{(y_i + s_{-i})^2} - \frac{2 s_{-i}}{(y_i + s_{-i})^3} = \left( c_i''(y_i) + \frac{2 s_{-i}}{(y_i + s_{-i})^3} \right) \frac{d y_i}{d s_{-i}} \nonumber \\
        \implies &\frac{d y_i}{d s_{-i}} = \frac{y_i - s_{-i}}{2 s_{-i} + (y_i + s_{-i})^3 c_i''(y_i)}. \label{eq:dBR1}
    \end{align}
    If we take the limit $s_{-i} \uparrow \frac{1}{c_i'(0)}$, which implies that $y_i \downarrow 0$, we get 
    \begin{align*}
        \frac{d y_i}{d s_{-i}} = \frac{- 1}{2 + c_i''(0)/(c_i'(0))^2} < 0,
    \end{align*}
    where the last inequality is true because $c_i'(0) > 0$ and $c_i''(0) \ge 0$. On the other hand, we know that the magnitude is bounded $\frac{d y_i}{d s_{-i}} = \frac{- 1}{2 + c_i''(0)/(c_i'(0))^2} \ge \frac{-1}{2}$.
\end{itemize}
The above two cases tell us that $\frac{d y_i}{d s_{-i}}$ at $s_{-i} = \frac{1}{c_i'(0)}$ has a left limit strictly less than $0$ but a right limit equal to $0$, so we have non-differentiability at $s_{-i} = \frac{1}{c_i'(0)}$. But as $\frac{d y_i}{d s_{-i}}$ is bounded near $s_{-i} = \frac{1}{c_i'(0)}$, we can define it to be equal to some finite value in $[\frac{- 1}{2 + c_i''(0)/(c_i'(0))^2}, 0]$ at $s_{-i} = \frac{1}{c_i'(0)}$, which is sufficient for our analysis.
An alternate analysis can be done using the \textit{envelop theorem} of Milgrom and Segal~\cite[Theorem 2]{milgrom2002envelope} to arrive at the same result.

Taking partial derivative of $V$ w.r.t.~$x_k$, we get
\begin{align*}
    \frac{\partial V}{\partial x_k} &= \frac{\partial }{\partial x_k} \left( \sum_i \frac{y_i}{y_i + s_{-i}} - \sum_i c_i(y_i) + \sum_i c_i(x_i) - 1 \right) \\
    &= \sum_i \left( \frac{s_{-i}}{(y_i + s_{-i})^2} - c_i'(y_i) \right) \frac{\partial y_i}{\partial x_k} - \sum_{i \neq k} \frac{y_i}{(y_i + s_{-i})^2} + c_k'(x_k).
\end{align*}
From the discussion above, we know that either $\frac{\partial y_i}{\partial x_k} = 0$ or $\frac{s_{-i}}{(y_i + s_{-i})^2} = c_i'(y_i)$ and $\frac{\partial y_i}{\partial x_k}$ is bounded. In either case, we have $\left( \frac{s_{-i}}{(y_i + s_{-i})^2} - c_i'(y_i) \right) \frac{\partial y_i}{\partial x_k} = 0$, so
\begin{align}
    \frac{\partial V}{\partial x_k} = c_k'(x_k) - \sum_{i \neq k} \frac{y_i}{(y_i + s_{-i})^2}. \label{eq:dV1}
\end{align}
Putting together, we can write $\frac{d V}{d t}$ as
\begin{align*}
    \frac{d V}{d t} &= \sum_k \frac{\partial V}{\partial x_k} \frac{d x_k}{d t} = \sum_k (y_k - x_k) c_k'(x_k) - \sum_k (y_k - x_k) \sum_{i \neq k} \frac{y_i}{(y_i + s_{-i})^2} \\
    &= \sum_i (y_i - x_i) c_i'(x_i) - \sum_i \frac{y_i  (\sigma - y_i - s_{-i})}{(y_i + s_{-i})^2},
\end{align*}
where $\sigma = \sum_k y_k$. Adding $V$ and $\frac{d V}{d t}$ together, we get
\begin{align*}
    V + \frac{d V}{d t} &= \sum_i \frac{y_i}{y_i + s_{-i}} - \sum_i c_i(y_i) + \sum_i c_i(x_i) - 1 + \sum_i (y_i - x_i) c_i'(x_i) - \sum_i \frac{y_i (\sigma - y_i - s_{-i})}{(y_i + s_{-i})^2} \\
    &= -1 + \sum_i (\underbrace{- c_i(y_i) + c_i(x_i) + (y_i - x_i) c_i'(x_i)}_{\le 0 \text{ as $c_i$ is convex}}) + \sum_i \frac{2 y_i}{y_i + s_{-i}} - \sum_i \frac{y_i \sigma}{(y_i + s_{-i})^2}.
\end{align*}
Now, let $p_i = \frac{y_i}{\sum_j y_i} = \frac{y_i}{\sigma}$ and $q_i = \frac{s_{-i}}{\sigma}$. Notice that $\sum_i p_i = 1$ and, for all $i$, $p_i \ge 0$ and $q_i \ge 0$. Plugging this into the inequality above, we get
\begin{align*}
    V + \frac{d V}{d t} &\le -1 + \sum_i \frac{2 y_i}{y_i + s_{-i}} - \sum_i \frac{y_i \sigma}{(y_i + s_{-i})^2} = -1 + \sum_i \frac{2 p_i}{p_i + q_i} - \sum_i \frac{p_i}{(p_i + q_i)^2} \\
    &= \sum_i p_i \left( -1 + \frac{2}{p_i + q_i} - \frac{1}{(p_i + q_i)^2} \right) = - \sum_i p_i \left( 1 - \frac{1}{p_i + q_i} \right)^2 \le 0.
\end{align*}
\qed \end{proof}
Let us now use Lemma~\ref{lm:V} to get the desired rate of convergence upper bound. We use standard Lyapunov arguments:
\begin{equation*}
    \frac{d V(t)}{dt} + V(t) \le 0 \implies \frac{d V(t)}{V(t)} \le -dt \implies V(t) \le V(0) e^{-t}.
\end{equation*}
For any $t \ge \ln\left( \frac{1}{\epsilon} \right) + \ln(V(0))$, we get $V(t) \le \epsilon$, which implies that for every agent $i$ we have $V_i(t) = V_i(\bx(t)) \le \epsilon \Longleftrightarrow u_i(\bx(t)) \ge u_i(BR_i(s_{-i}(t)), s_{-i}(t)) - \epsilon$. So, we are at an $\epsilon$-approximate equilibrium.
This completes the proof for the upper bound.
\qed \end{proof}

\section{Discrete-Time Dynamics (Approximately Continuous BR and Fictitious Play) and Equilibrium Computation}\label{sec:approx}
In this section, we consider discrete-time BR dynamics. We also provide an algorithm for computing an approximate equilibrium based on such dynamics. Proofs are given in Appendix~\ref{app:proofs}.

Let us consider a modification to the original Tullock contest model we have studied till now. We assume that each agent $i$ must always play an action $x_i \ge x_{\min}$ instead of $x_i \ge 0$, for some $x_{\min} \ge 0$. 
Notice that $x_{\min} = 0$ corresponds to the original model, while a $x_{\min} > 0$ says that any participant in the contest must have a positive minimum output.
An assumption of $x_{\min} > 0$ may be plausible in practical scenarios where there is a positive cost of participation (the cost of \textit{showing up} for the game).
We also normalize the cost functions and assume that $\min_i c_i(1) = 1$ for all $i$; this ensures that any rational agent will always play an action $\le 1$. We also assume that the second derivative and the ratio of the first derivatives of the cost functions are bounded: $\frac{\max_{i, z \in [x_{\min},1]} c'_i(z)}{\min_{i, z \in [x_{\min},1]} c'_i(z)} = B_1$ and $\max_{i, z \in [x_{\min},1]} c''_i(z) = B_2$. 

The first-order conditions for the case when $x_{\min} > 0$ are similar to the ones given in \eqref{eq:br} except that the critical point above which the first-order condition is satisfied with equality is $x_{\min}$ instead of $0$. 
The analysis for the continuous BR dynamics for this model is also analogous to the analysis for Theorem~\ref{thm:contConverge}.

Let us now consider BR dynamics in this model with the step-size, say $\Delta t$, small but not necessarily going to $0$. In particular,
\begin{align}\label{eq:delx}
    x_{i}(t+\Delta t) = x_{i}(t) + \Delta t \cdot (BR_{i}(s_{-i}(t)) - x_{i}(t)).
\end{align}
The continuous-time BR dynamics corresponds to equation \eqref{eq:delx} with $\Delta t \rightarrow 0$. We aim to find bounds on $\Delta t$ that ensure convergence.

\begin{lemma}\label{lm:suf}
For a profile $\bx$, let $H(\bx)$ be defined as
\begin{align*}
    H(\bx) = \frac{ \frac{B_2}{2} \sum_i (y_i - x_i)^2 + \sum_{i \in E} \frac{(\sigma - y_i - s_{-i})^2}{s_{-i}^2} }{\sum_i \frac{y_i (\sigma - y_i - s_{-i})^2}{\sigma (y_i + s_{-i})^2} },
\end{align*}
where $y_i = BR_i(s_{-i})$ and $\sigma = \sum_i y_i$.
If the step-size at time $t$ is bounded above by $1/\max(2, H(\bx(t))$, then the BR dynamics converges to the unique equilibrium. In particular, for $0 < \alpha_t \le 1/\max(2, H(\bx(t))$, we have $V(t+\alpha_t) \le (1 - \alpha_t) V(t)$.
\end{lemma}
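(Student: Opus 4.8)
The plan is to analyze a single step of the discrete dynamics \eqref{eq:delx} as a one-dimensional move along the segment it traces, and to control its effect on $V$ by a second-order Taylor estimate: the first-order term is supplied by the continuous-time computation of Lemma~\ref{lm:V}, and the quadratic remainder is exactly what $H(\bx)$ is built to bound. Fix $\bx = \bx(t)$, set $y_i = BR_i(s_{-i}(t))$ and $d_i = y_i - x_i$, and define $g(\alpha) = V(\bx + \alpha(\by - \bx))$ for $\alpha \in [0,\alpha_t]$. Then $g(0) = V(t)$, and since \eqref{eq:delx} sends $\bx(t)$ to $\bx(t) + \alpha_t(\by - \bx)$ we have $g(\alpha_t) = V(t+\alpha_t)$; the goal is $g(\alpha_t) \le (1-\alpha_t)g(0)$. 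Writing $g(\alpha_t) \le g(0) + \alpha_t g'(0) + \tfrac{\alpha_t^2}{2}\sup_{\tau\in[0,\alpha_t]} g''(\tau)$, it suffices to bound the two derivative terms.

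For the first-order term, observe that $g'(0) = \sum_k \frac{\partial V}{\partial x_k} d_k$ is precisely the continuous-time derivative $\frac{dV}{dt}$ taken along $\frac{dx_k}{dt} = y_k - x_k$, so Lemma~\ref{lm:V} gives $g'(0) \le -V(t) - D$, where $D = \sum_i \frac{y_i(\sigma - y_i - s_{-i})^2}{\sigma(y_i+s_{-i})^2}$ is exactly the denominator of $H(\bx)$.

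The crux is the second-order term $g''(\alpha) = \sum_{k,l} \frac{\partial^2 V}{\partial x_k \partial x_l} d_k d_l$. Differentiating the gradient \eqref{eq:dV1} along the segment, and using $\frac{d s_{-i}}{d\alpha} = \sum_{k\neq i} d_k = \sigma - y_i - s_{-i}$, the expression splits into a cost part $\sum_k c_k''(x_k(\alpha)) d_k^2$ and a contest-share part $\sum_i (-F_i'(s_{-i}(\alpha)))(\sigma - y_i - s_{-i})^2$, where $F_i(s) = \frac{BR_i(s)}{(BR_i(s)+s)^2}$. Since each $x_k(\alpha)$ is a convex combination of $x_k, y_k \in [x_{\min},1]$, the cost part is at most $B_2 \sum_k (y_k - x_k)^2$. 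For the share part, I would substitute $\frac{dy_i}{ds_{-i}}$ from \eqref{eq:dBR1} and simplify (the $c_i''$ term only helps) to get the clean bound $-F_i'(s) \le \frac{1}{2s^2}$ for active agents, with $F_i' = 0$ whenever $BR_i(s)=0$ (so only $i\in E$ contribute). The role of the constant $2$ in $\max(2,H)$ now becomes visible: $\alpha_t \le 1/2$ forces $x_j(\alpha) \ge \tfrac12 x_j$, hence $s_{-i}(\alpha) \ge \tfrac12 s_{-i}(t)$ and $-F_i'(s_{-i}(\alpha)) \le \frac{2}{s_{-i}(t)^2}$. Together these give $g''(\alpha) \le 2N$, where $N = \tfrac{B_2}{2}\sum_i (y_i-x_i)^2 + \sum_{i\in E}\frac{(\sigma - y_i - s_{-i})^2}{s_{-i}^2}$ is the numerator of $H$.

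Combining, $g(\alpha_t) \le V(t) - \alpha_t(V(t)+D) + \alpha_t^2 N = (1-\alpha_t)V(t) - \alpha_t(D - \alpha_t N)$, and the hypothesis $\alpha_t \le 1/\max(2,H) \le D/N$ makes $D - \alpha_t N \ge 0$, yielding $V(t+\alpha_t)\le(1-\alpha_t)V(t)$. Iterating this per-step contraction drives $V$ to $0$ geometrically, and since $V=0$ only at the unique equilibrium, the dynamics converges. I expect the main obstacle to be this second-order estimate: establishing the pointwise bound $-F_i'(s)\le \tfrac{1}{2s^2}$ and, more delicately, handling the non-differentiability of $BR_i$ at $s_{-i}=1/c_i'(0)$ together with possible changes of the active set $E$ along the segment, so that the uniform bound $g''\le 2N$ is legitimate over all of $[0,\alpha_t]$.
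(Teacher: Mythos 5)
Your proposal is correct and follows essentially the same route as the paper's proof: a second-order Taylor expansion of $V$ along the segment from $\bx$ to $\by$, with the first-order term bounded by $-V(\bx) - D$ via Lemma~\ref{lm:V}, and the Hessian quadratic form bounded by $2N$ using exactly the paper's key estimate (in its notation, $b_i \le \frac{1}{2(y_i+s_{-i})s_{-i}}$, which is your $-F_i'(s) \le \frac{1}{2s^2}$) together with the restriction $\alpha \le 1/2$ to keep $s_{-i}$ at intermediate points bounded below by a constant fraction of $s_{-i}(t)$. The two obstacles you flag at the end are indeed the delicate points, and the paper resolves them the way you anticipate: the pointwise derivative bound follows from monotonicity in the term $\eta_i = (y_i+s_{-i})^3 c_i''(y_i)$ (the $c_i''$ term only helps, as you say), while the non-differentiability and active-set issues are dispatched by noting boundedness of the one-sided derivatives (and, at the crucial step, by summing the nonnegative share terms over all agents rather than only over $E$).
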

If $x_{\min}$ is assumed to be strictly positive, then we can upper bound $H(\bx)$ as a function of $x_{\min}$. 
\begin{theorem}\label{thm:suf1}
    If $x_{\min} > 0$, then $H(\bx) = O\left(\frac{n(1+B_2)}{x_{\min}^3}\right)$ for all $\bx$, which implies that the dynamics reaches an $\epsilon$-approximate equilibrium in $O\left(\frac{1}{\alpha} \log\left(\frac{V(0)}{\epsilon}\right)\right)$ steps with a suitable step-size $\alpha = \Theta\left(\frac{x_{\min}^3}{n(1+B_2)}\right)$.
\end{theorem}
Notice that the bound in Theorem~\ref{thm:suf1} depends upon the number of agents $n$. This is essential, as highlighted by Lemma~\ref{lm:nec-n} below.
\begin{lemma}\label{lm:nec-n}
If the step-size is not $O(1/n)$, then there are instances with linear and homogeneous cost functions where the dynamics does not converge.
\end{lemma}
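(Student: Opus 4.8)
The plan is to exhibit, for each relevant $n$, a homogeneous linear-cost instance whose symmetric equilibrium becomes a \emph{repelling} fixed point of the discrete map \eqref{eq:delx} as soon as the step size exceeds a $\Theta(1/n)$ threshold. First I would fix $n$ agents with the common linear cost $c_i(y) = c\,y$ (taking $c = 1$ to match the normalization $\min_i c_i(1) = 1$). For linear costs the first-order condition \eqref{eq:br1} yields the closed-form best response $BR_i(s_{-i}) = \sqrt{s_{-i}/c} - s_{-i}$ whenever $s_{-i}c < 1$, and by symmetry the unique equilibrium is the symmetric profile with each coordinate equal to $x^* = (n-1)/(cn^2)$. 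One checks that $s_{-i}^* = (n-1)^2/(cn^2) < 1/c$, so the closed-form formula remains valid (and $BR_i > 0$) throughout a neighborhood of $x^*$.

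Next I would use homogeneity to reduce to one dimension. If all agents start at a common value and follow \eqref{eq:delx}, the profile stays on the diagonal for all time, so the dynamics is governed by the scalar map $F(x) = x + \alpha\big(\sqrt{(n-1)x/c} - n x\big)$, whose fixed point is $x^*$. The decisive computation is $F'(x^*)$: differentiating and substituting $\sqrt{(n-1)/(cx^*)} = n$ gives $F'(x^*) = 1 - \alpha n/2$. Hence $|F'(x^*)| < 1$ exactly when $0 < \alpha n < 4$, while $F'(x^*) < -1$ as soon as $\alpha n > 4$, i.e. precisely when the step size exceeds $4/n$.

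From here I would conclude via the standard repelling-fixed-point argument. Since $|F'(x^*)| > 1$, there is a neighborhood $U$ of $x^*$ from which every orbit other than the constant orbit at $x^*$ eventually escapes; choosing a symmetric initial profile in $U \setminus \{x^*\}$ on which the $\sqrt{\cdot}$ best-response formula is valid (and avoiding the measure-zero set of $F$-preimages of $x^*$), the orbit cannot converge to $x^*$. Because $x^*$ is the \emph{unique} Nash equilibrium, failure to converge to $x^*$ is failure to converge to equilibrium; moreover, once the orbit leaves $U$ its potential $V$ is bounded below by a fixed positive constant, so it is not $\epsilon$-approximate for small $\epsilon$. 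Consequently, for any step size $\alpha$ that is not $O(1/n)$ — in particular any constant $\alpha$ once $n > 4/\alpha$ — the $n$-agent homogeneous linear instance does not converge, showing that the $\Theta(1/n)$ step size of Lemma~\ref{lm:suf1} cannot be avoided.

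The main obstacle I anticipate is making rigorous that \emph{linear} instability of $x^*$ actually rules out convergence of the true nonlinear orbit, not merely of its linearization. I would address this with the elementary fact that a strictly repelling fixed point of a $C^1$ scalar map attracts no nearby orbit, combined with a small bookkeeping check that the iterates remain in the region where the closed-form best response applies until they have escaped a fixed neighborhood of $x^*$.
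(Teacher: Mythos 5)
Your reduction and computation are sound, and they land exactly on the paper's own construction: with homogeneous linear costs the symmetric (diagonal) dynamics under \eqref{eq:delx} is governed, after normalizing $v = x/x^*$, by the scalar map $G(v) = (1-\beta)v + \beta\sqrt{v}$ with $\beta = n\,\Delta t$, and your derivative condition $F'(x^*) = 1 - \alpha n/2 < -1$ is exactly $G'(1) < -1$, i.e.\ $\beta > 4$. This is the same threshold at which the paper's proof exhibits an explicit two-cycle $\{x,y\}$ of $G$ with $x < 1 < y$; indeed that cycle is precisely the period-doubling orbit born when $G'(1)$ crosses $-1$. So the instance family and the quantitative threshold $\Delta t > 4/n$ agree with the paper; the difference is that you argue via linear instability of the fixed point, while the paper solves for the period-2 orbit and starts the dynamics on it.

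The gap is in your final step, and it is exactly the obstacle you half-anticipated, but the repelling-fixed-point fact does not close it. Under the paper's notion of convergence (Section~\ref{sec:prelim}), the dynamics ``converges'' if it reaches an $\epsilon$-approximate equilibrium in finite time for \emph{every} $\epsilon > 0$; so non-convergence requires an orbit that \emph{never enters} some fixed neighborhood of $x^*$ (equivalently, whose regret stays bounded below by a fixed constant). A repelling fixed point of a $C^1$ scalar map only guarantees that no nearby orbit \emph{converges to} $x^*$; it does not prevent the orbit from re-entering every neighborhood of $x^*$ infinitely often. Your $G$ is a unimodal map, and unimodal maps beyond their first period doubling can be chaotic for larger $\beta$ (compare the logistic map at parameter $4$, whose repelling fixed point is approached arbitrarily closely by a dense orbit); such recurrent behavior would count as ``convergence'' under the paper's definition even though $x(t) \not\to x^*$. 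Your observation that ``once the orbit leaves $U$ the potential is bounded below'' covers only iterates outside $U$ and so does not repair this. The cheapest fix is what the paper does: factor the fixed point out of the period-2 equation (a quadratic in $\sqrt{v}$), obtain the cycle $v = \bigl(\beta(\beta-2) \pm \beta\sqrt{\beta(\beta-4)}\bigr)/\bigl(2(\beta-2)^2\bigr)$ for $\beta > 4$, and \emph{start} the dynamics on it, so the orbit takes only two values, each with regret bounded below; proving instead that orbits near $x^*$ are eventually trapped away from it (e.g.\ via attraction to the bifurcating 2-cycle) would need extra work and does not obviously hold uniformly over all $\beta > 4$.
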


Note that although the bound in Theorem~\ref{thm:suf1} does not depend upon $B_1$ (the ratio of the first-derivatives of the cost functions of the agents, which measures the relative skills of the agents), $B_1$ may be implicit in $x_{\min}$. If $x_{\min}$ is not sufficiently small, e.g., if $x_{\min} = \omega(1/B_1^2)$, then the equilibrium when the agents are restricted to play $x_i \ge x_{\min}$ may be different from the equilibrium when the agents can play less than $x_{\min}$. 
For example, for two agents with linear cost functions $c_1(x_1) = x_1$ and $c_2(x_2) = \beta x_2$, where $\beta \ge 1$, the unique equilibrium is $\bx^* = \left( \frac{\beta}{(1+\beta)^2}, \frac{1}{(1+\beta)^2} \right)$ if the agents can play any $x_i \ge 0$. Note that $B_1 = \beta$, so the equilibrium output of agent-$2$ is $\Theta(1/B_1^2)$. On the other hand, if the agents are restricted to play $x_i \ge x_{\min} = \omega(1/B_1^2)$, then the equilibrium will be forced to be different from $\bx^*$.
Given this observation, it would be natural to assume that $x_{\min}$ is small enough; in particular, $x_{\min} = O(1/B_1^2)$.
Moreover, the dependency on $B_1$ is unavoidable, as formalized by Lemma~\ref{lm:nec-cost} below.
\begin{lemma}\label{lm:nec-cost}
If the step-size is not $O(1/B_1)$, then there are instances with two agents and linear cost functions where the dynamics does not converge.
\end{lemma}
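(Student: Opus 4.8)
The plan is to exhibit a two-agent lottery contest and linearize the discrete-time map \eqref{eq:delx} about its equilibrium, showing that when $\Delta t$ is too large relative to $1/B_1$ the equilibrium becomes a repelling fixed point. Concretely, I would take $c_1(x) = x$ and $c_2(x) = \beta x$ with $\beta \ge 1$, so that $B_1 = \beta$. For linear costs the best response has the closed form $BR_i(s_{-i}) = \sqrt{s_{-i}/c_i'(0)} - s_{-i}$ whenever this is positive, and the unique interior equilibrium is $\bx^* = \left(\frac{\beta}{(1+\beta)^2}, \frac{1}{(1+\beta)^2}\right)$, exactly as in the example already stated in the text. I would choose $x_{\min}$ small enough (e.g.\ $x_{\min} \le \frac{1}{(1+\beta)^2}$, consistent with the regime $x_{\min} = O(1/B_1^2)$) so that $\bx^*$ remains the interior equilibrium and all best responses stay strictly positive in a neighborhood of $\bx^*$.

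Next I would write the one-step map $F(\bx)$ defined by \eqref{eq:delx}. With $n=2$ we have $s_{-1} = x_2$ and $s_{-2} = x_1$, so $F_i(\bx) = (1-\Delta t)x_i + \Delta t\, BR_i(s_{-i})$, and its Jacobian at $\bx^*$ is
\[
J = \begin{pmatrix} 1-\Delta t & \Delta t\, BR_1'(x_2^*) \\ \Delta t\, BR_2'(x_1^*) & 1-\Delta t \end{pmatrix}.
\]
Specializing the best-response derivative \eqref{eq:dBR1} to $c_i'' = 0$ gives $BR_i'(s_{-i}) = \frac{y_i - s_{-i}}{2 s_{-i}}$, and evaluating at $\bx^*$ yields $BR_1'(x_2^*)\, BR_2'(x_1^*) = -\frac{(\beta-1)^2}{4\beta} < 0$. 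Because the diagonal entries are equal, the eigenvalues are $\lambda_{\pm} = (1-\Delta t) \pm \Delta t\sqrt{BR_1' BR_2'}$, which are complex with
\[
|\lambda_{\pm}|^2 = (1-\Delta t)^2 + \Delta t^2\,\frac{(\beta-1)^2}{4\beta}.
\]

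A short calculation then shows $|\lambda_\pm| > 1$ precisely when $\Delta t > \frac{8\beta}{(\beta+1)^2}$, and this threshold is $\Theta(1/\beta) = \Theta(1/B_1)$ for large $\beta$. Hence if the step-size is $\omega(1/B_1)$, then for all sufficiently large $B_1 = \beta$ we have $\Delta t > \frac{8\beta}{(\beta+1)^2}$, so the spectral radius of $J$ exceeds $1$ and $\bx^*$ is a (spiral) source. I would then invoke the standard fact that a $C^1$ map whose Jacobian has spectral radius $>1$ at a fixed point is repelling there: some neighborhood $U$ of $\bx^*$ has the property that every orbit started in $U\setminus\{\bx^*\}$ eventually leaves $U$. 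Since $\bx^*$ is the unique equilibrium, and hence the unique fixed point of $F$, no orbit started near $\bx^*$ can converge to it, which gives the claimed non-convergence.

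The main obstacle I anticipate is the bookkeeping needed to make the local-to-global argument clean: I must confirm that the best responses stay in the interior (strictly positive and $\ge x_{\min}$) throughout the relevant neighborhood, so that $F$ is genuinely $C^1$ there and the boundary kink of $BR_i$ at $s_{-i} = 1/c_i'(0)$ plays no role, and I must pin down which repelling-fixed-point result I appeal to (a direct linearization estimate, or the stable-manifold/Hartman--Grobman machinery) to convert ``$|\lambda_\pm|>1$'' into genuine non-convergence of the full nonlinear dynamics. The eigenvalue computation itself is routine once the sign $BR_1' BR_2' < 0$ is identified, which is the structural reason the asymmetric two-agent contest exhibits rotational, and hence potentially unstable, behavior under large step-sizes.
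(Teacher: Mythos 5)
Your proposal is correct in its core computations and takes a genuinely different route from the paper. The paper's own proof is essentially empirical: it takes $c_1(z)=z$, $c_2(z)=z/d$ (so $B_1=d$), remarks that the non-linearity makes analytic computation of cycles tedious, and then \emph{simulates} the dynamics from $(1/10,1/10)$, locating the critical step-size $\alpha^*(d)$ by binary search and observing numerically that $\alpha^*(d)$ is linear in $d$ (with one concrete $6$-cycle tabulated). You instead linearize the one-step map at the unique equilibrium and show its eigenvalues are the complex pair $(1-\Delta t)\pm i\,\Delta t\,\tfrac{\beta-1}{2\sqrt{\beta}}$, so the equilibrium becomes a spiral source exactly when $\Delta t > 8\beta/(\beta+1)^2 = \Theta(1/B_1)$. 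Your intermediate steps check out: the equilibrium, the Jacobian structure, the product $BR_1'(x_2^*)\,BR_2'(x_1^*) = -(\beta-1)^2/(4\beta)$, and the threshold computation are all right, and the equilibrium is indeed bounded away from the kink $s_{-i}=1/c_i'(0)$, so $F$ is smooth nearby. Notably, your closed-form threshold \emph{explains} the linear dependence of $\alpha^*(d)$ on $d$ that the paper only observes in simulation; in that sense your argument, once completed, is more rigorous than the paper's. What each buys: the paper's simulation exhibits global behavior (a bona fide cycle reached from a starting point far from equilibrium, hence an orbit staying bounded away from equilibrium) at the cost of rigor; your linearization is analytic but only local.

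Two caveats to address in a write-up. First, the repelling-fixed-point fact you invoke requires \emph{all} eigenvalues strictly outside the unit circle, not merely spectral radius $>1$ (spectral radius alone permits a saddle, which has a stable manifold of converging orbits); here this is harmless because your eigenvalues are a conjugate pair of equal modulus, but state the hypothesis correctly. Second, "no orbit started near $\bx^*$ can converge to it" needs two refinements: (i) you must exclude starting points whose orbit hits $\bx^*$ exactly in finite time (e.g., shrink the neighborhood so $F$ is injective there and use a Baire-category or analyticity argument to pick a starting point avoiding $\bigcup_k F^{-k}(\bx^*)$); and (ii) under the paper's literal definition of convergence (reaching an $\epsilon$-approximate equilibrium in finite time for every $\epsilon>0$), an orbit that fails to converge but recurrently passes arbitrarily close to $\bx^*$ would still count as "converging," so you should either adopt the standard reading $\bx(t)\not\to\bx^*$ or additionally argue that orbits eventually stay bounded away from $\bx^*$ (which the paper's numerically exhibited cycle gives automatically). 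Both points are standard, and you flagged them yourself, but they are the part of the argument that actually needs writing.
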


If $x_{\min} = 0$, then our results do not provide a lower bound on the step-size that is independent of the action profile $\bx(t)$. In particular, Lemma~\ref{lm:suf} does not directly imply such a bound because $H(\bx)$ may be unbounded for some profiles $\bx$, then the step-size recommended by the lemma at $\bx$ to ensure convergence goes to $0$. Indeed, such a lower bound might not exist. 
On the other hand, even in the case of $x_{\min} = 0$, we can simulate with a pseudo $\hat{x}_{\min} = \Theta(\epsilon)$ to compute an equilibrium in $poly(1/\epsilon, n, B_1, B_2)$ steps as shown below.

\paragraph{Algorithm} Let us construct a modified game with a pseudo lower bound on the outputs of the agents: $\hat{x}_{\min} = \epsilon/(4B_1)$. We simulate the BR dynamics in this game with a step-size of $\alpha = \Theta\left(\frac{\hat{x}_{\min}^3}{n(1+B_2)}\right)$, as recommended by Theorem~\ref{thm:suf1}, to compute an $(\epsilon/2)$-approximate equilibrium of this modified game in $O\left(\frac{1}{\alpha} \log\left(\frac{V(0)}{\epsilon}\right)\right)$ steps.
Let this approximate equilibrium be $\hat{\bx}$. 
At $\hat{\bx}$, all agents have a regret of at most $\epsilon/2$ assuming that they can only play above $\hat{x}_{\min}$. By playing below $\hat{x}_{\min}$, they can further increase their utility by at most $c_i(\hat{x}_{\min}) - c_i(0) \le B_1 \hat{x}_{\min} / (1 - \hat{x}_{\min}) \le 2 B_1 \hat{x}_{\min} \le \epsilon/2$. So, at $\hat{\bx}$, the total regret of any agent in the original game is at most $\epsilon$, as required.

\subsection{Fictitious Play: Best-Response to Empirical Average}\label{sec:fictitious}
Let us consider a discrete-time dynamics with a step-size of $\Delta t = 1$, but where the agents best-respond to the empirical average action of the other agents. 
Let $\ox_i(t) = \frac{1}{t} \sum_{\tau = 1}^t x_i(t)$ and $\os_{-i}(t) = \sum_{j \neq i} \ox_j(t)$.
Formally, the dynamics is defined as follows: the action of agent $i$ at time $t+1$ is
\begin{align}\label{eq:avg-br}
    x_i(t+1) = BR_i(\os_{-i}(t)), & \qquad \forall i \in [n], t \in \bZ_{\ge 0}
\end{align}
Given this, we can write the updated empirical average at time $t+1$ as
\begin{align*}
    \ox_i(t+1) = \ox_i(t) + \frac{1}{t+1} (BR_i(\os_{-i}(t) - \ox_i(t)).
\end{align*}
Notice that $\ox_i(t)$ tracks a BR dynamics with a sequence of decreasing step-sizes that correspond to the harmonic sequence $(\frac{1}{t})_{t \in \bZ_{\ge 1}}$. The harmonic sequence satisfies the following crucial properties: as $t \rightarrow \infty$, the sequence $\frac{1}{t} \rightarrow 0$ but the series $\sum_{k = 1}^t \frac{1}{k} \rightarrow \infty$. These two properties ensure that the dynamics converges for the case $x_{\min} > 0$ using Lemma~\ref{lm:suf}. In fact, we can generalize this dynamics to a weighted average, where the step-size at time $t$ is $\eta_t$, and $\ox_i(t)$ follows
\begin{align}\label{eq:avg-br1}
    \ox_i(t+1) = \ox_i(t) + \eta_t (BR_i(\os_{-i}(t) - \ox_i(t)).
\end{align}
\begin{theorem}\label{thm:empirical}
If $x_{\min} > 0$, then a dynamics that evolves according to \eqref{eq:avg-br1} converges if the sequence of step-sizes $(\eta_t)_{t \in \bZ_{\ge 1}}$ satisfies: as $t \rightarrow \infty$, the sequence $\eta_t \rightarrow 0$ but the series $\sum_{k = 1}^t \eta_k \rightarrow \infty$.
\end{theorem}
Other examples of step-size sequences that lead to convergence are $\eta_t = 1/t^r$ for $r \in (0, 1]$ and $\eta_t = 1/\log(1+t)$. 
Note that convergence of $\overline{\bx}$ also implies convergence of $\bx$.

\section{Conclusion and Future Research}\label{sec:conclude}
We showed that the continuous BR dynamics, which is motivated by the observation that in certain applications the agents change their actions slowly compared to the feedback they receive from others, converges to the unique equilibrium in Tullock contests with convex costs. We then extended these convergence results to related discrete dynamics with small step sizes. These results indicate that we can expect agents in Tullock contests with convex costs to reach the equilibrium in a decentralized manner.

One open problem is to show convergence (or non-convergence) of the discrete dynamics in Section~\ref{sec:approx} when $x_{\min} = 0$.
Another direction is to study the case when the agents move at rates different than the continuous BR dynamics. 
If all agents move at a rate that is the same constant factor of the continuous BR dynamics rate, then the trajectory followed by the agents remains the same, and the dynamics converges.
On the other hand, if the relative rates of the agents can change arbitrarily over time, we may have non-convergence.
But there are scenarios between these two extreme cases for which convergence properties are unknown, e.g., when the agents move at rates that are different constant factors of the continuous BR dynamics rates. The potential function used in this paper does not easily extend to this case.
Further details are discussed in Appendix~\ref{app:speed}.
Another unexplored direction is to study learning in Tullock contests when the agents get only probabilistic feedback. In many practical applications, the proportional allocation function corresponds to the probability of allocating an indivisible item (and not the fraction of the item allocated to the agent). Here, the agent may only know his own actions and whether or not he won the indivisible item but may not know the actions of others.

\section*{Acknowledgments}
The authors wish to thank the WINE (2024) reviewers for suggestions that helped improve the presentation of the paper. A.G.\ was supported by EPSRC Grant EP/X040461/1 ``Optimisation for Game Theory and Machine Learning''.

\bibliographystyle{splncs04}
\bibliography{ref}

\appendix
\section{Omitted Proofs}\label{app:proofs}
\subsection{Proof of Theorem~\ref{thm:contConverge} (Lower Bound)}
\begin{proof}
We provide an example where it takes $\Omega(\log\left(\frac{1}{\epsilon}\right) + \log(V(0)))$ time to converge to an $\epsilon$-approximate equilibrium. 

Let there be $n = 2$ homogeneous agents with linear cost function $c_1(y) = c_2(y) = y/4$ for any $y \ge 0$. It can be easily derived that the unique equilibrium is $\bx^* = (1, 1)$ and that there is a closed-form formula for the best-response $BR_i(s_{-i}) = 2\sqrt{s_{-i}} - s_{-i}$ (see, e.g., \cite{vojnovic2015contest}). 

Let $\bx(0) = (y(0), y(0))$, where we assume that $y(0)$ is sufficiently large and far away from the equilibrium value $1$. As the two players are homogeneous and start from the same action, they will maintain the same action $\bx(t) = (y(t), y(t))$, for some $y(t)$, for all $t \ge 0$. We suppress the dependency on $t$ to avoid clutter. Let us track the evolution of $y$. From equation \eqref{eq:dx}, we have
\begin{align*}
    \frac{dy}{dt} = \frac{d x_i}{d t} &= BR_i(s_{-i}) - x_{i} = BR_i(y) - y = (2\sqrt{y} - y) - y = 2(\sqrt{y} - y).
\end{align*}

Let us now compute the potential function $V$. We have
\begin{align*}
    V &= \sum_i (u_i(BR_i(s_{-i}), s_{-i}) - u_i(x_i, s_{-i})) = 2 \left( u_1(2\sqrt{y} - y, y) - u_1(y, y) \right) \\
    &= 2 \left( \frac{2\sqrt{y} - y}{2\sqrt{y} - y + y} - \frac{2\sqrt{y} - y}{4} - \left ( \frac{y}{y + y} - \frac{y}{4} \right) \right) = 1 + y - 2\sqrt{y} = (\sqrt{y} - 1)^2.
\end{align*}
Further, we can find the rate of change of $V$ using the rate of change of $y$ as
\begin{align*}
    &\frac{d V}{d y} = \frac{d}{dy} (\sqrt{y} - 1)^2 = \frac{2 (\sqrt{y} - 1)}{2 \sqrt{y}} = \frac{\sqrt{y} - 1}{\sqrt{y}}, \\
    &\frac{d V}{d t} = \frac{d V}{d y} \frac{d y}{d t} = \frac{\sqrt{y} - 1}{\sqrt{y}} 2 (\sqrt{y} - y)  = -2(\sqrt{y} - 1)^2 = -2V  \implies V(t) = V(0) e^{-2t}.
\end{align*}
\sloppy
By the definition of $V$ and the symmetry of the two agents, at an $\epsilon$-approximate equilibrium, $V(t) = 2\epsilon$. So, it takes exactly $t = \frac{1}{2} \ln(\frac{1}{2\epsilon}) + \frac{1}{2} \ln(V(0)) = \Omega(\log\left(\frac{1}{\epsilon}\right) + \log(V(0)))$ time to reach the $\epsilon$-approximate equilibrium.
\qed \end{proof}
\subsection{Proof of Lemma~\ref{lm:suf}}
\begin{proof}
We shall suppress the dependency on $t$ for conciseness, e.g., write $\bx(t)$ as $\bx$ and $x_i(t)$ as $x_i$. 
Let $y_i = BR_i(s_{-i})$.
Given the minimum output level $x_{\min} \ge 0$, $y_i$ satisfies the first-order conditions given below, which can be derived following steps similar to the derivation of conditions \eqref{eq:br1}~and~\eqref{eq:br2}.
\begin{align}
    \frac{s_{-i}}{(y_i + s_{-i})^2} &= c_i'(y_i), & \text{ if $s_{-i} c_i'(x_{\min}) < 1$,} \label{eq:br3} \\
    y_i &= x_{\min}, & \text{ if $s_{-i} c_i'(x_{\min}) \ge 1$}. \label{eq:br4}
\end{align}
From equation~\eqref{eq:V1}, we know that the potential $V(t) = V(\bx(t))$ is given by
\begin{align*}
    V(t) = \sum_i \frac{y_i}{y_i + s_{-i}} - \sum_i c_i(y_i) + \sum_i c_i(x_i) - 1.
\end{align*}
In Lemma~\ref{lm:V}, we showed that $\frac{dV}{dt} < 0$, i.e., $\lim_{\Delta t \rightarrow 0} \frac{V(t+\Delta t) - V(t)}{\Delta t} < 0 $, which implies $\lim_{\Delta t \downarrow 0} (V(t+\Delta t) - V(t)) < 0 $. We shall now consider small positive values for $\Delta t$, not necessarily limiting to $0$, such that we can still guarantee that $V(t+\Delta t) - V(t) < 0$.

Let $\Delta t = \alpha$ be the step-size. Given the current profile $\bx$ and the best-response profile $\by = (BR_i(s_{-i}))_{i \in [n]}$, the profile after the $\alpha$-step is $\alpha (\by - \bx) + \bx$. We want to show that $V(t+\alpha) = V(\alpha (\by - \bx) + \bx) < V(t) = V(\bx)$.
The multivariate Taylor's expansion of $V(\alpha (\by - \bx) + \bx)$ w.r.t.~$V(\bx)$ with a second-order error term is given by
\begin{align}\label{eq:taylor}
    V(\alpha (\by - \bx) + \bx) &= V(\bx) + \alpha (\by - \bx)^\intercal \nabla V(\bx) + \frac{\alpha^2}{2}(\by - \bx)^\intercal \nabla^2 V(\hat{\bx}) (\by - \bx), 
\end{align}
where $\hat{\bx} = \bx + \beta (\by - \bx)$ for some value $\beta \in [0, \alpha]$, and where $\nabla V(\bz) = (\frac{\partial V(\bz)}{\partial z_i})_{i \in [n]}$ and $\nabla^2 V(\bz) = (\frac{\partial^2 V(\bz)}{\partial z_i \partial z_j})_{i,j \in [n]}$ for any given profile $\bz$.




Let us first compute $\nabla^2 V(\bx)$.\footnote{
As discussed in Lemma~\ref{lm:V}, all points except when $s_{-i} = \frac{1}{c_i'(x_{\min})}$ are continuously differentiable (and can be shown to be twice continuously differentiable by extending the same argument). When $s_{-i} = \frac{1}{c_i'(x_{\min})}$, the derivative of $y_i$ w.r.t.~$s_{-i}$ is not well-defined. However, the derivative is well-defined and bounded at all points in its neighborhood. This allowed us to extend the analysis for the differentiable points to this non-differentiable point. A similar but more detailed analysis can be done for computing $\nabla^2 V(\bx)$ to get the same results, but here let us restrict our focus on the \textit{generic} case of $s_{-i} \neq \frac{1}{c_i'(x_{\min})}$.
} 
As $s_{-i} = \sum_{j \neq i} x_j$, so $\frac{d s_{-i}}{d x_i} = 0$ and $\frac{d s_{-i}}{d x_j} = 1$ for all $j \neq i$.
From equation \eqref{eq:dBR1}, we have
\begin{align*}
    \frac{\partial y_i}{\partial x_j} = \frac{y_i - s_{-i}}{2 s_{-i} + (y_i + s_{-i})^3 c_i''(y_i)},
\end{align*}
for $j \neq i$ and $y_i > x_{\min}$, and $0$ otherwise.
As derived in equation \eqref{eq:dV1}, we know that
\begin{align*}
    \frac{\partial V}{\partial x_i} = c_i'(x_i) - \sum_{k \neq i} \frac{y_k}{(y_k + s_{-k})^2}.
\end{align*}
Let $E = \{ k \in [n] \mid y_k > x_{\min} \}$. 
Differentiating again w.r.t.~$x_j$, if $j = i$, we have
\begin{align*}
    \frac{\partial^2 V}{\partial x_i^2} &= c_i''(x_i) + \sum_{k \in E, k \neq i} \left( \frac{2y_k}{(y_k + s_{-k})^3} + \left( \frac{2y_k}{(y_k + s_{-k})^3}  - \frac{1}{(y_k + s_{-k})^2}  \right) \frac{\partial y_i}{\partial x_k} \right) \\
    &= c_i''(x_i) + \sum_{k \in E, k \neq i} \frac{2y_k}{(y_k + s_{-k})^3} + \sum_{k \in E, k \neq i} \frac{\partial y_i}{\partial x_k} \frac{y_k - s_{-k}}{(y_k + s_{-k})^3}.
\end{align*}
Plugging in the value of $\frac{\partial y_i}{\partial x_k}$ for $k \in E$, we get
\begin{align*}
    &\frac{\partial^2 V}{\partial x_i^2} = c_i''(x_i) + \sum_{k \in E, k \neq i} \frac{2y_k}{(y_k + s_{-k})^3} + \sum_{k \in E, k \neq i} \frac{(y_k - s_{-k})^2}{(y_k + s_{-k})^3 (2s_{-k} + (y_k + s_{-k})^3 c_k''(y_k) )}. 
\end{align*}
Let $\eta_k = (y_k + s_{-k})^3 c_k''(y_k)$. The above equation can be rewritten as
\begin{align*}
    \frac{\partial^2 V}{\partial x_i^2} &= c_i''(x_i)  + \sum_{k \in E, k \neq i} \frac{2y_k}{(y_k + s_{-k})^3} + \sum_{k \in E, k \neq i} \frac{(y_k - s_{-k})^2}{(y_k + s_{-k})^3 (2s_{-k} + \eta_k )} \\
    &= c_i''(x_i) + \sum_{k \in E, k \neq i} \frac{(y_k + s_{-k})^2 + 2 y_k \eta_k}{(y_k + s_{-k})^3 (2s_{-k} + \eta_k )}.
\end{align*}
Let $a_i = c_i''(x_i)$. Let $b_i =  \frac{(y_i + s_{-i})^2 + 2 y_i \eta_i}{(y_i + s_{-i})^3 (2s_{-i} + \eta_i )}$ if $i \in E$ and $b_i = 0$ if $i \notin E$. Notice that $a_i \ge 0$ and $b_i \ge 0$ for all $i \in [n]$. We have 
\begin{align*}
    \frac{\partial^2 V}{\partial x_i^2} = a_i + (\sum_i b) - b_i.
\end{align*}
Following a similar sequence of steps, we can compute $\frac{\partial^2 V}{\partial x_i \partial x_j}$ for $j \neq i$ as
\begin{align*}
    \frac{\partial^2 V}{\partial x_i \partial x_j} = (\sum_i b) - b_i - b_j.
\end{align*}
Let $\ba = (a_i)_{i \in [n]}$ and $\bb = (b_i)_{i \in [n]}$. Let $\be = (1, 1, \ldots, 1)$ denote the $n$-dimensional vector of all $1$s. Let $A$ denote the $n \times n$ diagonal matrix with $A_{ii} = a_i + b_i$ and $A_{ij} = 0$ for $j \neq i$. Putting everything together, we can write $\nabla^2 V(\bx)$ as
\begin{align}
    \nabla^2 V(\bx) = (\sum_i b) \be \be^{\intercal} - \bb \be^\intercal - \be \bb^\intercal + A.
\end{align}
Let us now consider a vector $\bw \in \bR^n$. We can compute $\bw^\intercal \nabla^2 V(\bx) \bw$ as
\begin{align*}
    \bw^\intercal \nabla^2 V(\bx) \bw &= \bw^\intercal ((\sum_i b) \be \be^{\intercal} - \bb \be^\intercal - \be \bb^\intercal + A) \bw \\
    &= (\sum_i b_i) (\sum_i w_i)^2 - (\sum_i b_i w_i) (\sum_i w_i) - (\sum_i w_i) (\sum_i b_i w_i) + \sum_i (a_i + b_i) w_i^2  \\
    &= \sum_i b_i \left( (\sum_j w_j)^2 - 2 w_i (\sum_j w_j) + w_i^2 \right) + \sum_i a_i w_i^2 \\
    &= \sum_i b_i (\sum_{j \neq i} w_j)^2 + \sum_i a_i w_i^2. 
\end{align*}

Now, we know that $b_i =  \frac{(y_i + s_{-i})^2 + 2 y_i \eta_i}{(y_i + s_{-i})^3 (2s_{-i} + \eta_i )}$ for $i \in E$. Let us find a suitable upper bound on $b_i$. Let the function $g : \bR_{\ge 0} \rightarrow \bR_{\ge 0}$ be defined as 
\begin{align*}
    g(z) = \frac{(y_i + s_{-i})^2 + 2 y_i z}{(y_i + s_{-i})^3 (2s_{-i} + z )}.
\end{align*}
Notice that $b_i = g(\eta_i)$.
Differentiating $g$ w.r.t.~$z$, we get
\begin{align*}
    \frac{dg}{dz} &= \frac{2y_i}{(y_i + s_{-i})^3 (2s_{-i} + z )} - \frac{(y_i + s_{-i})^2 + 2 y_i z}{(y_i + s_{-i})^3 (2s_{-i} + z )^2}\\
    &= \frac{2y_i (2s_{-i}+z) - (y_i + s_{-i})^2 - 2 y_i z}{(y_i + s_{-i})^3 (2s_{-i} + z )^2} = \frac{- (y_i - s_{-i})^2}{(y_i + s_{-i})^3 (2s_{-i} + z )^2}.
\end{align*}
So, $\frac{dg}{dz}$ is less than or equal to $0$ for all $y_i \ge 0$, $s_{-i} \ge 0$, and $z \ge 0$, which are satisfied by the definition of $y_i$, $s_{-i}$, and $z$. So, $g(z)$ is maximized at $z = 0$, which implies that
\begin{align*}
    b_i = g(\eta_i) \le \max_{z \ge 0} g(z) = g(0) \le \frac{1}{2(y_i + s_{-i})s_{-i}}.
\end{align*}
Plugging back this bound on $b_i$ and the value of $a_i$ to $\bw^\intercal \nabla^2 V(\bx) \bw$, we get
\begin{align*}
    \bw^\intercal \nabla^2 V(\bx) \bw \le \sum_{i \in E} \frac{(\sum_{j \neq i} w_j)^2}{2(y_i + s_{-i})s_{-i}} + \sum_i c_i''(x_i) w_i^2.
\end{align*}
In the Taylor's expansion in equation~\eqref{eq:taylor}, $\nabla^2 V$ is evaluated at $\hat{\bx}$. Let $\hat{s}_{-i} = \sum_{j \neq i} \hat{x}_j$ and $\hat{y}_i = BR_i(\hat{s}_{-i})$. Setting the vector $\bw = \by - \bx$, we get
\begin{align*}
    &(\by - \bx)^\intercal \nabla^2 V(\hat{\bx}) (\by - \bx) \le \sum_{i \in E} \frac{(\sum_{j \neq i} (y_i - x_i))^2}{2(\hat{y}_i + \hat{s}_{-i})\hat{s}_{-i}} + \sum_i c_i''(\hat{x}_i) (y_i - x_i)^2.
\end{align*}
Notice that $\hat{s}_{-i} = \alpha \sum_{j \neq i} y_j + (1 - \alpha) s_{-i} \ge (1 - \alpha) s_{-i}$. So, $(\hat{y}_i + \hat{s}_{-i})\hat{s}_{-i} \ge (1-\alpha)^2 s_{-i}$. As $\alpha$ is a small constant, $(1-\alpha)^2$ is bounded away from $0$, e.g., if $\alpha \le 1/2$, then $(1-\alpha)^2 \ge 1/4$. Let $\sigma = \sum_i y_i$. For $\alpha \le 1/2$, we get
\begin{align*}
    &(\by - \bx)^\intercal \nabla^2 V(\hat{\bx}) (\by - \bx) \le \sum_i \frac{2 (\sigma - y_i - s_{-i}))^2}{s_{-i}^2} + \sum_i B_2 (y_i - x_i)^2,
\end{align*}
as $c_i''(z) \le B_2$ for any $z$. Let $\alpha$ be equal to the smaller among $1/2$ and
\[
    \alpha \le \frac{\sum_i \frac{y_i (\sigma - y_i - s_{-i})^2}{\sigma (y_i + s_{-i})^2} }{ \frac{B_2}{2} \sum_i (y_i - x_i)^2 + \sum_{i \in E} \frac{(\sigma - y_i - s_{-i})^2}{s_{-i}^2} }.
\]
Notice that $\sum_i \frac{y_i (\sigma - y_i - s_{-i})^2}{\sigma (y_i + s_{-i})^2} \le - ((\by - \bx)^\intercal \nabla V(\bx) + V(\bx))$ from Lemma~\ref{lm:V}. So, we get
\begin{align*}
    &\alpha \le \frac{-2(\by - \bx)^\intercal \nabla V(\bx) - 2V(\bx)}{(\by - \bx)^\intercal \nabla^2 V(\hat{\bx}) (\by - \bx)} \implies (\by - \bx)^\intercal \nabla V(\bx) + \frac{\alpha (\by - \bx)^\intercal \nabla^2 V(\hat{\bx}) (\by - \bx)}{2} \le -V(\bx).
\end{align*}
Plugging this into the Taylor's expansion in equation~\eqref{eq:taylor}, we get
\begin{align*}
    V(\alpha (\by - \bx) + \bx) \le (1 - \alpha) V(\bx),
\end{align*}
as required.
\qed \end{proof}

\subsection{Proof of Theorem~\ref{thm:suf1}}
\begin{proof}
Note that for any $\bz \in \bR^n$ and $n \ge 2$, we have $\sum_i z_i^2 \le \sum_i (\sum_{j \neq i} z_j)$ because
\begin{multline*}
    \sum_i (\sum_{j \neq i} z_j)^2 = \sum_i ((\sum_j z_j) - z_i)^2 = n (\sum_j z_j)^2 + \sum_i z_i^2 - 2 (\sum_j z_j) (\sum_i z_i) \\
    = (n - 2) (\sum_j z_j)^2 + \sum_i z_i^2.
\end{multline*}
Using this, we have $\sum_i (y_i - x_i)^2 \le \sum_i (\sigma - y_i - s_{-i})^2$.

As $x_i, y_i \in [x_{\min}, 1]$ for all $i$, we have $\frac{\sigma (y_i + s_{-i})^2}{y_i} \le \frac{n^3}{x_{\min}}$ and $\frac{1}{s_{-i}^2} \le \frac{1}{(n-1)^2 x_{\min}^2}$. So, $H(\bx) \le \frac{B_2 n^3}{2 x_{\min}} + \frac{n^3}{(n-1)^2 x^3_{min}} = O(\frac{(1+B_2) n}{x_{\min}^3})$ assuming $x_{\min} \le \frac{1}{n}$.

For any step-size of $\alpha > 0$ satisfying $\alpha \le \frac{1}{H(\bx)}$ for all $\bx$, from Lemma~\ref{lm:suf}, we know that $V(t+\alpha) \le (1-\alpha) V(t)$. After $k$ steps, we have
\begin{align*}
    V(k\alpha) \le (1 - \alpha)^k V(0) \le e^{-k\alpha} V(0).
\end{align*}
If we take $k = \frac{1}{\alpha} \log(\frac{V(0)}{\epsilon})$, then $V(k\alpha) \le \epsilon$, which implies that we have reached an $\epsilon$-approximate equilibrium, as required.
\qed \end{proof}
\subsection{Proof of Lemma~\ref{lm:nec-n}}
\begin{proof}
We construct a simple example with $n$ homogeneous agents where the BR dynamics cycles. Each agent has a cost function $c_i(x_i) = \frac{n-1}{n^2} x_i$. It is not hard to derive that the unique equilibrium is at $(1, 1, \ldots, 1)$ (see, e.g., \cite{vojnovic2015contest}). We will construct a simple two-step cycle where the agents play $(x, x, \ldots, x)$ and $(y, y, \ldots, y)$ repeatedly, where $x < 1$ and $y > 1$. This implies that if the agents enter this cycle, then they will never reach an $\epsilon$-approximate equilibrium for small enough $\epsilon$.

Let us try to construct this cycle, i.e., we try to compute the values of $x$ and $y$ given $\Delta t$, which will then tell us the required bound on $\Delta t$ to have such a cycle. 
From the first-order conditions, for a linear cost function, we have a closed-form formula for the BR as
\begin{align*}
    &\frac{\partial u_i(z, s_{-i})}{\partial z} = 0 \implies \frac{s_{-i}}{(z + s_{-i})^2} = \frac{n-1}{n^2} \implies BR_i(s_{-i}) = n \sqrt{\frac{s_{-i}}{n-1}} - s_{-i}.
\end{align*}
As $x$ is the next action of each agent if everyone is playing $y$, so
\begin{align*}
    x = y + \Delta t \cdot (BR((n-1)y) - y) = y + \Delta t \cdot (n \sqrt{y} - (n-1)y - y) = y + n \Delta t (\sqrt{y} - y).
\end{align*}
Let $\beta = n \Delta t$, we get 
\[
    x = \beta \sqrt{y} + (1 - \beta) y.
\]
By symmetry, as $y$ is the next action for an agent if everyone is playing $x$, we get
\[
    y = \beta \sqrt{x} + (1 - \beta) x \Longleftrightarrow (y - (1-\beta)x)^2 = \beta^2 x.
\]
Combining the two equations in $x$ and $y$, in particular, replacing $x$ to get an equation in a single variable $y$, we get
\begin{align*}
     (y - (1-\beta)(\beta \sqrt{y} + (1 - \beta) y)))^2 &= \beta^2 (\beta \sqrt{y} + (1 - \beta) y) \\
     \Longleftrightarrow  ( \beta (2 - \beta) y - \beta (1-\beta) \sqrt{y})^2 &= \beta^2 (\beta \sqrt{y} + (1 - \beta) y) \\
     \Longleftrightarrow (2 - \beta)^2 y^2 + (1-\beta)^2 y - 2(2-\beta)(1-\beta) y \sqrt{y} &= \beta \sqrt{y} + (1 - \beta) y \\
     \Longleftrightarrow  \sqrt{y} ((2 - \beta)^2 y \sqrt{y} - 2(2-\beta)(1-\beta) y - \beta (1-\beta) \sqrt{y} - \beta) &= 0.
\end{align*}
From the above equation, we get our first root as $\sqrt{y} = 0 \implies y = 0$. This is the degenerate solution that corresponds to everyone playing $0$, and therefore, the BR also being $0$ (see Section~\ref{sec:prelim} for a discussion on BR to $0$). Let us focus on the other roots. For simplicity, let $z = \sqrt{y}$. We have
\begin{align*}
    (2 - \beta)^2 z^3 - 2(2-\beta)(1-\beta) z^2 - \beta (1-\beta) z - \beta = 0.
\end{align*}
Notice that $z = 1$ is a root because
\begin{align*}
    (2 - \beta)^2 - 2(2-\beta)(1-\beta) - \beta (1-\beta) - \beta &= (2 - \beta) ((2 - \beta) - 2(1-\beta)) - \beta + \beta^2 - \beta\\
    &= (2 - \beta) \beta - 2 \beta + \beta^2 = 0.
\end{align*}
This root $\sqrt{y} = z = 1 \implies y = 1$ corresponds to the equilibrium $(1, 1, \ldots, 1)$. Let us now derive the two other roots, which are of our primary interest as they give us the cycle. First, let us factor out the root $z=1$ to convert the cubic equation to a quadratic equation. Let $\hz = z - 1 \Longleftrightarrow z = \hz + 1$. Replacing $z$ by $\hz + 1$, we get
\begin{align*}
    &(2 - \beta)^2 (\hz + 1)^3 - 2(2-\beta)(1-\beta) (\hz + 1)^2 - \beta (1-\beta) (\hz + 1) - \beta = 0 \\
    &\Longleftrightarrow \hz^3 (2 - \beta)^2 + \hz^2 (3(2 - \beta)^2 - 2(2-\beta)(1-\beta)) \\
    & \qquad\qquad + \hz (3(2 - \beta)^2 - 4(2-\beta)(1-\beta) - \beta(1-\beta)) \\
    & \qquad\qquad + (2 - \beta)^2 - 2(2-\beta)(1-\beta) - \beta (1-\beta) - \beta = 0\\
    &\Longleftrightarrow \hz(\hz^2 (2-\beta)^2 + (2-\beta)(4-\beta) \hz + (4 - \beta)) = 0.
\end{align*}
Solving the above equation, in addition to the root $\hz = 0$ that we have already considered, we get the following two roots
\begin{align*}
    &\hz = \frac{\beta - 4 \pm \sqrt{\beta(\beta - 4)}}{2(2-\beta)} \Longleftrightarrow z = \hz + 1 = \frac{\beta \pm \sqrt{\beta(\beta - 4)}}{2(\beta-2)} \\
    &\qquad\Longleftrightarrow y = z^2 = \frac{\beta (\beta - 2) \pm \beta \sqrt{\beta(\beta - 4)}}{2(\beta-2)^2}.
\end{align*}
In particular, $x = \frac{\beta (\beta - 2) - \beta \sqrt{\beta(\beta - 4)}}{2(\beta-2)^2} < 1$ and $y = \frac{\beta (\beta - 2) - \beta \sqrt{\beta(\beta - 4)}}{2(\beta-2)^2} > 1$ are the two roots, assuming $\beta \ge 4$. In particular, if we take $\beta = 6$, we get $x = \frac{3(2-\sqrt{3})}{8} = 0.1005$ and $y = \frac{3(2+\sqrt{3})}{8} = 1.3995$. As $\beta = \Delta t \cdot n = 6$ implies that $\Delta t = \frac{6}{n}$, we get a two step cycle for $\Delta t = \frac{6}{n}$. We can create such cycles for all $\Delta t > \frac{4}{n}$, so $\Delta t \le \frac{4}{n} = O(\frac{1}{n})$ is necessary for convergence.
\qed \end{proof}
\subsection{Proof of Lemma~\ref{lm:nec-cost}}
\begin{proof}
We consider examples with two agents.
Let the first agent have a cost function $c_1(z) = z$ and the second agent have a cost function $c_2(z) = z/d$ for all $z \ge 0$ and some $d \ge 1$. 
Notice that agent $2$ has a lower cost for the same output than agent $1$, and $c_1'(z) = 1$ and $c_2'(z) = 1/d$. So, $\frac{\min_{i,z} c_i'(z)}{\max_{i,z} c_i'(z)} = \frac{1}{d}$. Let $\alpha = 1 / \Delta t$. We next show non-convergence for $\Delta t = \Omega(1/d) \Longleftrightarrow \alpha = O(d)$, as required.

As the cost functions of the agents is linear, using the first-order conditions given in equation~\ref{eq:du}, the BR of agent $1$ is given by the explicit formula $BR_1(s_{-1}) = BR_1(x_2) = \sqrt{x_2} - x_2$. Similarly, the BR of agent $2$ is $BR_2(s_{-2}) = BR_2(x_1) = \sqrt{d x_1} - x_1$. The unique equilibrium can also be explicitly computed and is equal to $\left( \frac{d}{(1+d)^2}, \frac{d^2}{(1+d)^2} \right)$.

The non-linearity of the BR dynamics makes it tedious to analytically compute the cycles for non-homogeneous agents, especially when $\Delta t$ is strictly less than $1$ . For example, if $d = 16$ and $\Delta t = 0.5$, starting from the state $(1/10, 1/10)$, the dynamics ends up in the cycle in Table~\ref{tab:cycle} where $\bx(t) = \bx(t+6)$. So, we compute the cycles using numerical simulations.

We simulate the BR dynamics starting from the profile $(1/10, 1/10)$. Given a value of $d$, we find the critical value of $\alpha$, denoted by $\alpha^*(d)$, such that the dynamics converges if $\Delta t < \frac{1}{\alpha^*(d)}$ but goes into a cycle if $\Delta t \ge \frac{1}{\alpha^*(d)}$. We do this by a simple binary search over the values of $\Delta t = 1/\alpha$. We notice that $\alpha^*(d)$ is an exact linear function of $d$, as required; plotted in Figure~\ref{fig:dependencyOnCost}. The code is included in the supplementary material.
\begin{table}[t]
    \centering
    \begin{tabular}{|c|c c|}
    \hline
    & $x_1(t)$ & $x_2(t)$ \\
    \hline
    $t$ & $0.021697$ & $0.923661$ \\
    $t+1$ & $0.029555$ & $0.745583$ \\
    $t+2$ & $0.073722$ & $0.701843$ \\
    $t+3$ & $0.104820$ & $0.857095$ \\
    $t+4$ & $0.086759$ & $1.023655$ \\
    $t+5$ & $0.043385$ & $1.057547$ \\
    $t+6$ & $0.021697$ & $0.923661$ \\
    $t+7$ & $0.029555$ & $0.745583$ \\
    $\ldots$ & $\ldots$ & $\ldots$ \\
    \hline
    \end{tabular}
    \caption{Cycle two non-homogeneous agents ($d = 10$).}
    \label{tab:cycle}
\end{table}
\begin{figure}[t]
    \centering
    \includegraphics[width=0.5\linewidth]{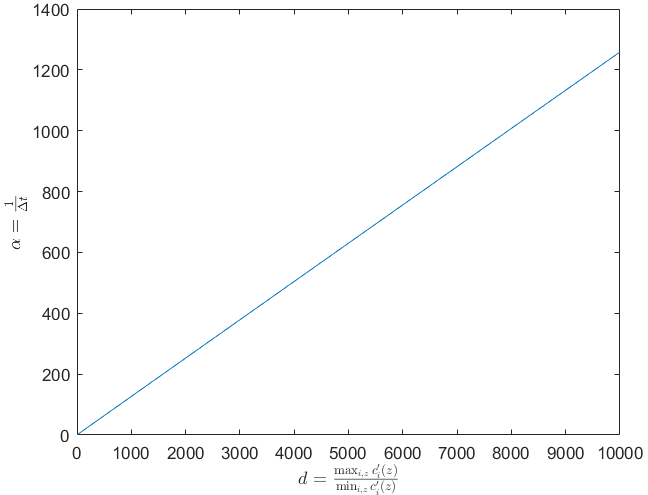}
    \caption{Dependency of step-size on cost ratio.}
    \label{fig:dependencyOnCost}
\end{figure}
\qed \end{proof}
\subsection{Proof of Theorem~\ref{thm:empirical}}
\begin{proof}
From Lemma~\ref{lm:suf} and Theorem~\ref{thm:suf1}, we know that at time $t$, if the step-size $\eta_t \le \alpha$, where $\alpha = \Theta\left(\frac{x_{\min}^3}{n(1+B_2)}\right)$, then the potential $V$ decreases by a factor of $(1-\eta_t)$.
As $\eta_t \rightarrow 0$, there exists a $\tau$ such that $\eta_t \le \alpha$ and $V(t+1) \le (1 - \eta_t) V(t)$ for all $t \ge \tau$. So, we get for $t \ge \tau$,
\begin{align*}
    V(t+1) &\le (\prod_{k = \tau}^{t} (1 - \eta_k)) V(\tau) \le (\prod_{k = \tau}^{t} e^{-\eta_k}) V(\tau)  = e^{-\sum_{k = \tau}^{t} \eta_k} V(\tau).
\end{align*}
As $\sum_{k = \tau}^{t} \eta_k \rightarrow \infty$, so $e^{-\sum_{k = \tau}^{t} \eta_k} \rightarrow 0$ and $V(t) \rightarrow 0$.
\qed \end{proof}

\section{Agents Move at Different Rates}\label{app:speed}
When the agents move at arbitrary rates that can depend upon time, it is easy to see that the continuous-time dynamics may not converge because we can simulate a discrete-time dynamics using the continuous-time dynamics by adjusting the rates. In particular, consider the example for two agents presented in Lemma~\ref{lm:nec-cost} where a discrete-time dynamics with a step-size of $1/2$ goes into a cycle. We can simulate this dynamics by slowing down the first agent and allowing the second agent to move, and then slowing down the second agent and allowing the first agent to move, and repeating the process. In a similar manner, the cycles presented in \cite{ghosh2023best1} and \cite{ghosh2023best2}, where only one agent moves at a time, can also be simulated.

On the other hand, simple modifications to the continuous BR dynamics converge. For example, if the dynamics followed by the agents is
\begin{align*}
    \frac{d x_i}{dt} = f(\bx, t) (BR_i(s_{-i}) - x_i),
\end{align*}
where $f(\bx, t)$ is some positive scaling factor that may depend upon the profile $\bx$ and the time $t$ but is the same for all agents. This dynamics follows exactly the same path as the original continuous BR dynamics (albeit at a different speed), and thus converges. Let us consider an additional modification to the dynamics
\begin{align*}
    \frac{d x_i}{dt} = \eta_i f(\bx, t) (BR_i(s_{-i}) - x_i),
\end{align*}
where $\eta_i > 0$ is some positive constant specific to agent $i$. This dynamics speeds up certain agents relative to others. Our intuition is that this dynamics also converges, as we are essentially stretching or squeezing the vector field along different dimensions of the phase space, but a formal analysis remains open. Studying similar variations to the dynamics and showing their convergence (or non-convergence) is an important direction for future work.

\end{document}